%% file: main.tex
\documentclass[11pt]{article}

\usepackage[preprint]{acl}

\usepackage{times}
\usepackage{latexsym}
\usepackage[T1]{fontenc}
\usepackage[utf8]{inputenc}
\usepackage{microtype}
\usepackage{inconsolata}

\usepackage{graphicx}
\usepackage{subcaption}
\usepackage{booktabs}
\usepackage{amsmath}
\usepackage{amssymb}
\usepackage{mathtools}
\usepackage{amsthm}
\usepackage{multirow}
\usepackage{xcolor}
\usepackage{colortbl}
\usepackage{changepage}
\usepackage{threeparttable}
\usepackage{pifont}
\usepackage{enumitem}
\usepackage[normalem]{ulem}
\useunder{\uline}{\ul}{}
\usepackage[capitalize,noabbrev]{cleveref}
\usepackage[most]{tcolorbox}
\usepackage{algorithm}
\usepackage{algorithmic}
\usepackage[disable,textsize=tiny]{todonotes}

\newcommand{\method}{BOSQ}
\newcommand{\inlinecomment}[1]{\textcolor{gray}{\small\sffamily\(\triangleright\) #1}}

\newtcolorbox{promptbox}[1]{
    colback=gray!5!white,
    colframe=gray!75!black,
    fonttitle=\bfseries,
    title=#1,
    arc=2mm,
    enhanced,
    boxrule=0.5pt,
    left=1mm,
    right=1mm,
    top=0.75mm,
    bottom=0.75mm,
    before skip=0.5em,
    after skip=0.7em
}

\theoremstyle{plain}
\newtheorem{theorem}{Theorem}[section]

\theoremstyle{definition}

\theoremstyle{remark}

\definecolor{myred}{rgb}{0.75, 0.45, 0.50}
\definecolor{myblue}{rgb}{0.15, 0.40, 0.70}
\definecolor{myorange}{rgb}{0.85, 0.50, 0.15}
\definecolor{mygray}{rgb}{0.5, 0.5, 0.5}
\definecolor{mypurple}{rgb}{0.40, 0.35, 0.60}
\definecolor{mygreen}{rgb}{0.30, 0.65, 0.55}
\definecolor{mybrown}{rgb}{0.60, 0.35, 0.30}

\title{Scaling GraphLLM with Bilevel-Optimized Sparse Querying}

\author{
 \textbf{Yangzhe Peng\textsuperscript{1}},
 \textbf{Haiquan Qiu\textsuperscript{2}},
 \textbf{Quanming Yao\textsuperscript{2}},
 \textbf{Kun He\textsuperscript{1}},
\\
\\
 \textsuperscript{1}Huazhong University of Science and Technology,
 \textsuperscript{2}Tsinghua University,
}

\begin{document}
\maketitle

\begin{abstract}
LLMs have recently shown strong potential in enhancing node-level tasks on text-attributed graphs (TAGs) by providing explanation features.
However, their practical use is severely limited by the high computational and monetary cost of repeated LLM queries.
To illustrate, naively generating explanations for all nodes on a medium-sized benchmark like Photo (48k nodes) using a representative method (e.g., TAPE) would consume \emph{days} of processing time.
In this paper, we propose Bilevel-Optimized Sparse Querying (\method{}), a general framework that selectively leverages LLM-derived explanation features to enhance performance on node-level tasks on TAGs.
We design an adaptive sparse querying strategy that selectively decides when to invoke LLMs, avoiding redundant or low-gain queries and significantly reducing computation overhead.
Extensive experiments on six real-world TAG datasets involving two types of node-level tasks demonstrate that \method{} runs substantially faster than existing GraphLLM methods while consistently delivering on-par or superior performance.
Our code is available at \url{https://anonymous.4open.science/r/BOSQ-0336}.
\end{abstract}

\input{sections/01-intro}
\input{sections/03-method}

\input{sections/04-exp}
\input{sections/02-rw}
\input{sections/05-conclusion}

\section*{Limitations}

While \method{} establishes a principled bilevel optimization framework for sparse graph querying, several limitations remain. First, our current implementation focuses on the generative LLM-as-Explainer paradigm; extending the selector to discriminative LLM-as-Encoder settings may further broaden its applicability. Second, the method uses a fixed cardinality budget, and future work may consider dynamic budgeting or nucleus-style sparse selection to support finer-grained graph-wise adaptivity. Finally, the current selector uses an efficient learnable vector parameterization. More expressive non-linear selectors, such as MLP-based architectures, could capture richer topological dependencies, but may require additional tuning and introduce extra computational overhead.

\section*{Ethical Considerations}

This work aims to reduce the computational and monetary cost of LLM-enhanced graph learning by avoiding unnecessary LLM queries. The method does not introduce new data collection, human-subject experiments, or model deployment decisions. As with other methods that use LLM-generated text features, downstream users should consider potential biases, privacy issues, or licensing constraints in the text-attributed graphs and LLMs they use.

\bibliography{ref}

\appendix
\input{sections/06-appendix}

\end{document}

%% file: sections/01-intro.tex
\section{Introduction}

Recently, Large Language Models (LLMs) have demonstrated strong semantic comprehension and reasoning capabilities~\cite{touvron2023}, leading to significant interest in exploring their utility for Text-Attributed Graphs (TAGs), where nodes are characterized by rich textual information~\cite{chen2024c,fang2025,hu2020b}. However, the direct integration of LLMs and graphs remains challenging: while LLMs excel at sequential reasoning, they struggle to inherently capture the non-Euclidean structural dependencies and relational inductive biases that characterize graph-structured data.

\begin{figure}[t]
  \begin{center}
    \centerline{\includegraphics[width=\columnwidth]{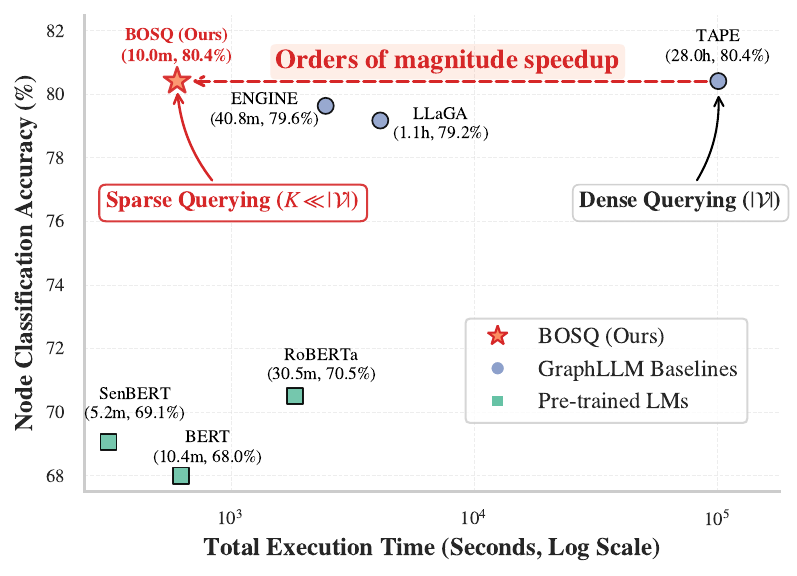}}
    \vspace{-2mm}
    \caption{
        \method{} improves the efficiency-performance trade-off of existing GraphLLM frameworks. By learning which nodes most benefit from LLM-generated explanations under a sparse query budget, \method{} achieves substantially faster end-to-end execution than TAPE while preserving competitive task performance. Our method (red star) sits near the ideal top-left region, offering a practical and scalable solution for large-scale TAGs where dense GraphLLM baselines are infeasible. Please refer to Table~\ref{tab:clf-exp} for detailed results.
        }
        \label{fig:acc_vs_time}
    \end{center}
    \vspace{-6mm}
\end{figure}

To bridge this gap, recent studies have proposed integrating LLMs with graph learning methods, forming several emerging paradigms collectively called GraphLLMs~\cite{li2024e}. Specifically, three main approaches have been identified~\cite{wu2025a}: \text{1.} The \textbf{Encoder} paradigm utilizes LLMs to encode nodes' textual attributes, generating more expressive feature representations ($\mathbf{h}^{\text{LLM}}$) that surpass shallow embeddings, which are then aggregated by downstream GNNs~\cite{zhu2024b}. \text{2.} The \textbf{Predictor} paradigm employs LLMs as direct classifiers or regressors, consuming structured prompts ($\mathcal{P}$) constructed from nodes' text and their aggregated neighborhood contexts to produce final, text-based predictions~\cite{chen2024c,tang2024b}. The process can be abstracted as: $\text{Pred}_i = \text{LLM}(\mathcal{P}_i)$. \text{3.} The \textbf{Explainer} paradigm leverages LLMs' generative capability to enhance node attributes by generating a detailed, semantic-rich \textbf{explanation feature} $\mathbf{e}_n$ \cite{he2024}. Following this terminology, ``explainer'' refers to LLM-based explanation-feature generation rather than post-hoc model interpretability. This feature is derived from the node's textual context, and is designed to augment the node's original information, thereby enriching the node features before being processed by GNNs.

However, a central obstacle to making LLM-enhanced TAG methods broadly practical is the cost of repeatedly invoking large models over graph-structured instances. While LLMs can provide rich semantic augmentation, such augmentation is typically much more expensive than GNN message passing and often has to be applied at the node or instance level. As the coverage of LLM-augmented nodes increases, the computational and monetary cost can therefore grow rapidly. In previous explainer-style pipelines, this manifests as generating explanation features for all nodes before downstream graph learning:
{\setlength{\abovedisplayskip}{0.5em}
\setlength{\belowdisplayskip}{0.5em}
\[
C_{\text{total}} \propto |\mathcal{V}| \times C_{\text{LLM}}
\]
}
where $C_{\text{LLM}}$ denotes the cost of a single LLM query. For graphs of realistic size, e.g., tens or hundreds of thousands of nodes, this translates into days or even weeks of processing time and significant financial expense, rendering the approach infeasible for practical deployment, especially on medium to large-scale benchmark datasets.

Moreover, the assumption that every node contributes equally to the downstream task contradicts the principle of information sparsity observed in both biological and artificial neural networks~\cite{bengio2017consciousness,gao2025weight}. Just as human cognition allocates attention selectively (the ``spotlight'' metaphor)~\cite{posner1980orienting} and recent sparse LLM architectures~\cite{lu2025,yuan2025} dynamically route computation to a subset of relevant components, graph data inherently exhibits structural and semantic redundancy. This observation suggests that the ``brute-force'' approach of dense querying is not only inefficient but potentially suboptimal due to noise accumulation. Inspired by these mechanisms of selective attention,
we propose \textbf{Bilevel-Optimized Sparse Querying (\method{})}, an efficient and general explanation-guided framework. The core of \method{} is an \textbf{adaptive sparse querying strategy} that intelligently selects a small, fixed-size budget of $K$ nodes ($K \ll |\mathcal{V}|$) for LLM interaction as illustrated in Figure~\ref{fig:motivation}. This selection is guided by learnable scores assigned to each node, which are optimized directly by backpropagating gradients from the downstream task's validation loss. This task-driven mechanism ensures that the limited query budget is allocated to nodes whose explanation features provide the highest utility for improving the model's performance, striking an optimal balance between efficiency and effectiveness.

Our contributions are summarized as follows:
\begin{itemize}[nosep]
    \item We propose \textbf{\method{}}, a novel framework that efficiently integrates LLM-derived explanation features to boost node-level task performance on TAGs.
    \item \textbf{\method{}} employs an \textbf{adaptive sparse querying strategy} leveraging bilevel optimization to dynamically prioritize and select nodes for LLM explanation queries.
    \item We empirically demonstrate that \method{} runs substantially faster on six real-world TAG datasets while consistently maintaining competitive or even superior predictive performance, yielding a stronger overall efficiency-performance trade-off.
\end{itemize}

\begin{figure*}[t!]
  \begin{center}
    \centerline{\includegraphics[width=2\columnwidth]{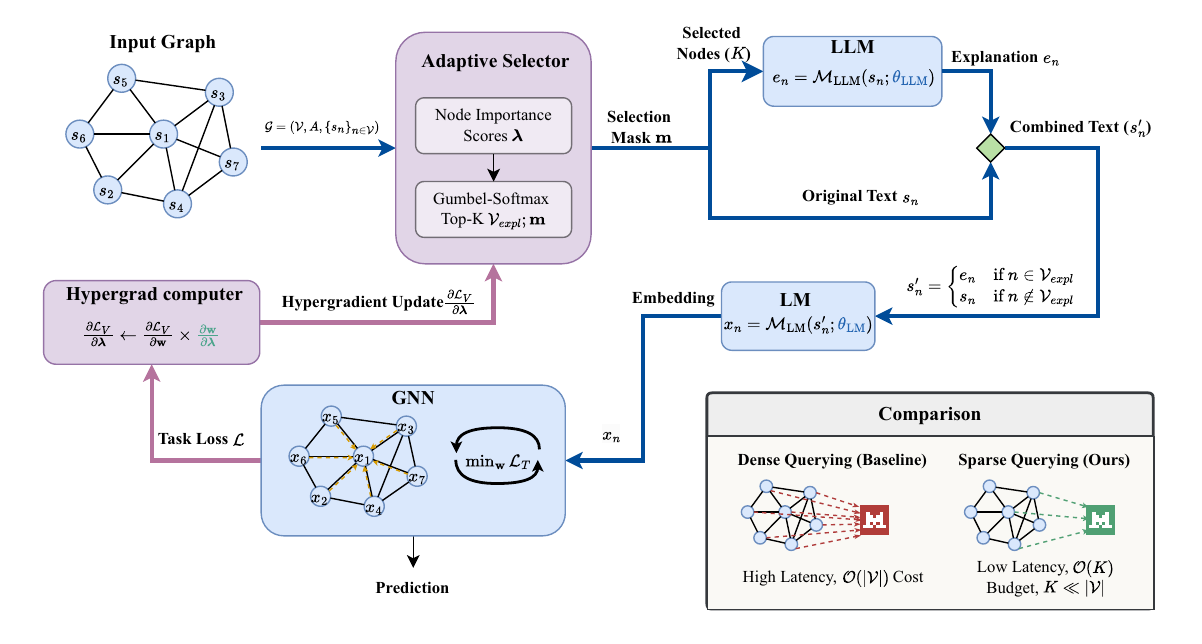}}
    \vspace{-3mm}
    \caption{
        Overview of the \method{} framework. Our method treats node selection as a bilevel optimization problem. The outer loop (purple) learns an Adaptive Selector to identify a sparse subset of nodes that benefit most from LLM explanations, while the inner loop (blue) optimizes a GNN on the resulting augmented graph. By using hypergradients to guide the selection mask $\mathbf{m}$, \method{} selectively invokes the LLM only for critical nodes ($K \ll |\mathcal{V}|$). This mechanism achieves a superior trade-off between task performance and computational efficiency compared to previous dense querying baselines (see Comparison box).
        }
        \label{fig:motivation}
    \end{center}
    \vspace{-8mm}
\end{figure*}

%% file: sections/03-method.tex
\section{Method}

In this section, we present the detailed methodology of Bilevel-Optimized Sparse Querying (\method{}), a framework that efficiently incorporates Large Language Model (LLM)-derived explanation features to enhance node-level tasks on Text-Attributed Graphs (TAGs). The core idea lies in an adaptive sparse querying strategy that selects a subset of nodes for explanation generation to balance performance and computational cost.

\subsection{Preliminary}

\textbf{Text-attributed graphs.} Formally, a text-attributed graph (TAG)~\cite{ma2021} can be represented as \(\mathcal{G} = (\mathcal{V}, A, \{s_n\}_{n \in \mathcal{V}})\), where \(\mathcal{V}\) is a set of \(N\) nodes, \(A \in \mathbb{R}^{N \times N}\) is the adjacency matrix, and \(s_n \in \mathcal{D}^{L_n}\) is a sequential text associated with node \(n \in \mathcal{V}\), with \(\mathcal{D}\) as the words or tokens dictionary, and \(L_n\) as the sequence length.

\textbf{Node level tasks.} Consider a graph \(\mathcal{G}\) with a set of nodes \(\mathcal{V}\). The nodes are partitioned into a labeled set \(\mathcal{V}_l \subset \mathcal{V}\) and an unlabeled set \(\mathcal{V}_u\). The general objective of both node classification (node clf) and node regression (node reg) is to train a graph-based neural network model, utilizing the structure of \(\mathcal{G}\) and the known values of \(\mathcal{V}_l\) (which can be discrete class labels or continuous values), to predict the corresponding unknown values for the nodes in \(\mathcal{V}_u\). While the fundamental goal is shared, the definition of the unlabeled set \(\mathcal{V}_u\) differs: for node classification, \(\mathcal{V}_u\) comprises all remaining nodes (\(\mathcal{V}_u = \mathcal{V} \setminus \mathcal{V}_{l}\)), whereas for node regression, \(\mathcal{V}_u\) only represents a specified subset of nodes (e.g., a particular type of node in a heterogeneous graph) whose continuous value is targeted for prediction.

 
\textbf{LLM-as-Explainer Paradigm} Recently, TAPE~\cite{he2024} have demonstrated that instead of directly encoding $X_v$ using a shallow Language Model (LM), invoking a Large Language Model (LLM) to generate explanatory features significantly boosts performance. Formally,
for each node $n$, a Large Language Model (LLM), denoted as $\mathcal{M}_{\text{LLM}}$ with parameters ${\theta_{\text{LLM}}}$, is queried to transform the raw text $s_n$ into a more informative explanatory text $e_n$:
\begin{equation}
e_n = \mathcal{M}_{\text{LLM}}(s_n; {\theta_{\text{LLM}}})
\end{equation}

The original text $s_n$ and the LLM-generated explanation $e_n$ are synthesized into an enriched textual attribute $s'_n = \mathcal{T}(s_n, e_n)$. $\mathcal{T}$ can be implemented as a simple concatenation (i.e., $s_n \, \Vert \, e_n$) or a complete replacement (i.e., $s'_n = e_n$ or $s'_n = s_n$). 

The integrated text \(s'_n\) is converted into a \(D\)-dimensional node embedding \(x_n\) using a Language Model (LM, e.g. RoBERTa), \(\mathcal{M}_{\text{LM}}\) with parameters \({\theta_{\text{LM}}}\):
\begin{equation}
x_n = \mathcal{M}_{\text{LM}}(s'_n; {\theta_{\text{LM}}})
\end{equation}

A Graph Neural Network (GNN), \(\mathcal{M}_{\text{GNN}}\) with parameters \({\mathbf{w}}\), takes as input the feature matrix \(X = [x_1, \dots, x_N]^T \in \mathbb{R}^{N \times D}\) and the adjacency matrix \(A\) to produce predictions \(\hat{Y}\). The GNN is trained by minimizing the loss function \(\mathcal{L}\) over the training set \(\mathcal{V}_{\text{train}}\). The prediction and optimization objectives are defined as follows:
\begin{gather}
\hat{Y} = \mathcal{M}_{\text{GNN}}(X, A; {\mathbf{w}})\\[-0.5em]
\min_{{\mathbf{w}}} \mathcal{L}_T(\hat{Y}, Y) = \min_{{\mathbf{w}}} \sum_{n \in \mathcal{V}_{\text{train}}} \ell(\hat{y}_n, y_n)
\end{gather}
\subsection{Bilevel-Optimized Sparse Querying}

The core efficiency of our framework stems from the selective application of LLM explanations to a subset of nodes. To achieve this, we define a selection mask $\mathbf{m} \in \{0, 1\}^N$, where $m_n = 1$ indicates that node $n$ is selected for LLM explanation, and $m_n = 0$ otherwise.

However, directly optimizing $\mathbf{m}$ is a non-trivial combinatorial problem, which prevents the use of gradient-based optimization due to its discrete nature. While one could rely on static heuristics (e.g., selecting nodes based on degree or entropy), such methods are task-agnostic and may fail to capture the dynamic contribution of specific nodes to the downstream performance.

To bridge this gap, we propose to relax the discrete mask into a continuous importance vector $\boldsymbol{\lambda} = (\lambda_1, \dots, \lambda_N)$. By treating $\boldsymbol{\lambda}$ as a differentiable proxy for the selection mask $\mathbf{m}$, we can formulate node selection as a bilevel optimization problem. In this setup, the outer loop optimizes the node importance scores $\boldsymbol{\lambda}$ to maximize generalization on validation data, while the inner loop updates the model parameters $\mathbf{w}$ based on the selected information. The transformation from continuous scores $\boldsymbol{\lambda}$ to the final discrete mask $\mathbf{m}$ is then performed through a top-$K$ sparsification process, which we will detail in Section~\ref{sec:detail method}.

Formally, the bilevel optimization problem can be formulated as:
\vspace{-0.5mm}
\begin{equation}
\begin{aligned}\label{eq: bilevel opt}
\boldsymbol{\lambda}^*
&\coloneqq
\underset{\boldsymbol{\lambda}}{\operatorname{argmin}}\,
\mathcal{L}_{\text{V}}^*(\boldsymbol{\lambda}), \\
\mathcal{L}_{\text{V}}^*(\boldsymbol{\lambda})
&\coloneqq
\mathcal{L}_{\text{V}}\big(\boldsymbol{\lambda},
\mathbf{w}^*(\boldsymbol{\lambda}) \big), \\
\mathbf{w}^*(\boldsymbol{\lambda})
&\coloneqq
\underset{\mathbf{w}}{\operatorname{argmin}}\,
\mathcal{L}_{\text{T}}(\boldsymbol{\lambda}, \mathbf{w})
\end{aligned}
\end{equation}

where $\mathcal{L}_{\text{T}}$ and $\mathcal{L}_{\text{V}}$ denote the training and validation losses, respectively. By computing the hypergradient of the validation loss with respect to $\lambda_n$, we can accurately quantify each node's influence on the model's generalization ability, allowing the framework to prioritize explanations for the most ``informative'' nodes.

Equipped with the selection mask $\mathbf{m}$ (derived from $\boldsymbol{\lambda}$), we design a Sparse Querying architecture that selectively integrates LLM-derived features into the GNN pipeline. The forward pass proceeds through three core components:
\begin{equation}
  s'_n = 
  \begin{cases} 
    \mathcal{M}_{\text{LLM}}(s_n; {\color{myblue} \theta_{\text{LLM}}}) & \text{if } n \in \mathcal{V}_{expl} \\ 
    s_n & \text{if } n \notin \mathcal{V}_{expl} 
  \end{cases}
\end{equation}
\begin{equation}
  x_n = \mathcal{M}_{\text{LM}}(s'_n; {\color{myblue} \theta_{\text{LM}}})
\end{equation}
\begin{equation}
  \hat{Y} = \mathcal{M}_{\text{GNN}}(X, A; {\color{myorange} \mathbf{w}})
\end{equation}
where $\mathcal{V}_{expl} \subset \mathcal{V}$ represents the selected nodes to be explained. To ensure computational efficiency, we keep the parameters of the LLM (${\color{myblue} \theta_{\text{LLM}}}$) and the LM (${\color{myblue} \theta_{\text{LM}}}$) frozen, only optimizing the GNN parameters ${\color{myorange} \mathbf{w}}$.
The loss function $\ell$ is instantiated as cross-entropy for node classification and $L_1$ loss for node regression, following the standard practice in existing GraphLLM benchmarks \cite{robinson2024, wu2025a}. Algorithm \ref{alg:bilevel} illustrates the complete framework of our method.

\subsection{Task-driven Optimization and Implementation}\label{sec:detail method}

Having established the bilevel framework in equation~\ref{eq: bilevel opt}, we now describe the task-driven optimization process for the importance scores $\boldsymbol{\lambda}$ and the subsequent derivation of the selection mask $\mathbf{m}$. We optimize $\boldsymbol{\lambda}$ by computing the validation hypergradient with respect to the node importance scores:
\vspace{-0.5mm}
\begingroup
\setlength{\abovedisplayskip}{0.2em}
\setlength{\belowdisplayskip}{0.2em}
\small
\begin{equation}
\begin{aligned}
\frac{\partial \mathcal{L}_{\text{V}}^*(\boldsymbol{\lambda})}{\partial \boldsymbol{\lambda}}
= &\;
{\color{mygray}
\frac{\partial \mathcal{L}_{\text{V}}(\boldsymbol{\lambda}, \mathbf{w}^*)}
{\partial \boldsymbol{\lambda}}
}
&+
\frac{\partial \mathcal{L}_{\text{V}}(\boldsymbol{\lambda}, \mathbf{w}^*)}
{\partial \mathbf{w}^{*}}
{\color{mygreen}
\frac{\partial \mathbf{w}^*}{\partial \boldsymbol{\lambda}}}
\end{aligned}
\end{equation}
\vspace{-0.5em}
\endgroup

Following the standard convention in gradient-based hyperparameter optimization~\cite{bengio2000, franceschi2018}, we evaluate the optimized model on a held-out validation set and use this validation loss as the outer objective. In our formulation, the adaptive node scores $\boldsymbol{\lambda}$ affect the validation objective only through the inner-loop solution $\mathbf{w}^*(\boldsymbol{\lambda})$.

Consequently, the direct gradient term $\partial \mathcal{L}_{V} / \partial \boldsymbol{\lambda}$ vanishes identically. The remaining indirect term is the relevant channel for optimizing sparse queries, and we approximate this inner-response term with a Neumann series for tractability~\cite{liu2022}.

\begin{theorem}[Sparse Query Hypergradient]\label{thm:sparse-query-hypergradient}
Under the bilevel formulation in Equation~\ref{eq: bilevel opt} and the hypergradient decomposition above, the validation hypergradient used to update the sparse-query scores is
\begin{equation}\label{eq:sparse-hypergradient}
\begin{aligned}
\left.
\frac{\partial \mathcal{L}_{V}^{*}}
{\partial \boldsymbol{\lambda}}
\right|_{\boldsymbol{\lambda}'}
&= -
\frac{\partial \mathcal{L}_{V}}
{\partial \mathbf{w}^{*}}
{\color{mybrown}
\lim_{i \to \infty} \sum_{j=0}^i
\left( I -
\frac{\partial^2 \mathcal{L}_{T}}
{\partial \mathbf{w} \partial \mathbf{w}^T}
\right)^j} \\
&\quad \times
\frac{\partial^2 \mathcal{L}_{T}}
{\partial \mathbf{w} \partial \boldsymbol{\lambda}^T}
\Big|_{\boldsymbol{\lambda}', \mathbf{w}^*(\boldsymbol{\lambda}')}
\end{aligned}
\end{equation}
where the Neumann-series term approximates the response of the inner GNN optimum to the selector scores.
\end{theorem}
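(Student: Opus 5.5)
The proof is the standard implicit-function-theorem argument for hypergradients (as in \cite{liu2022}), followed by a Neumann-series expansion of the inverse inner Hessian.

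\textbf{Step 1: vanishing direct term.} By the discussion preceding the theorem, $\boldsymbol{\lambda}$ enters $\mathcal{L}_{V}$ only through $\mathbf{w}^*(\boldsymbol{\lambda})$. Hence $\partial \mathcal{L}_{V}/\partial \boldsymbol{\lambda} \equiv 0$, and the hypergradient decomposition reduces to
\[
\frac{\partial \mathcal{L}_{V}^*}{\partial \boldsymbol{\lambda}} = \frac{\partial \mathcal{L}_{V}}{\partial \mathbf{w}^*}\,\frac{\partial \mathbf{w}^*}{\partial \boldsymbol{\lambda}}.
\]

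\textbf{Step 2: the Jacobian $\partial \mathbf{w}^*/\partial \boldsymbol{\lambda}$.} I will assume standard regularity: $\mathcal{L}_{T}$ is twice continuously differentiable near $(\boldsymbol{\lambda}', \mathbf{w}^*(\boldsymbol{\lambda}'))$, and $\mathbf{w}^*(\boldsymbol{\lambda}')$ is a nondegenerate local minimizer, so the Hessian $H = \partial^2 \mathcal{L}_{T}/\partial \mathbf{w}\partial \mathbf{w}^T$ is invertible there. The inner optimality condition $\partial \mathcal{L}_{T}/\partial \mathbf{w}\,(\boldsymbol{\lambda}, \mathbf{w}^*(\boldsymbol{\lambda})) = 0$ then holds identically in a neighborhood of $\boldsymbol{\lambda}'$. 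The implicit function theorem makes $\mathbf{w}^*$ differentiable, and differentiating the condition gives
\[
H\,\frac{\partial \mathbf{w}^*}{\partial \boldsymbol{\lambda}} + \frac{\partial^2 \mathcal{L}_{T}}{\partial \mathbf{w}\partial \boldsymbol{\lambda}^T} = 0,
\qquad\text{so}\qquad
\frac{\partial \mathbf{w}^*}{\partial \boldsymbol{\lambda}} = -H^{-1}\frac{\partial^2 \mathcal{L}_{T}}{\partial \mathbf{w}\partial \boldsymbol{\lambda}^T}.
\]
Since $\boldsymbol{\lambda}$ only reaches $\mathcal{L}_{T}$ through the relaxed mask, the mixed derivative exists as long as the mask is a differentiable function of $\boldsymbol{\lambda}$. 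This is where the continuous relaxation is used.

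\textbf{Step 3: Neumann series, the main obstacle.} We need
\[
H^{-1} = \lim_{i\to\infty}\sum_{j=0}^i (I-H)^j,
\]
which holds when the spectral radius of $I-H$ is below $1$. Positive definiteness alone does not guarantee this. I will assume $0 \prec H \preceq cI$ with $c<2$; after rescaling $\mathcal{L}_{T}$ by a step size $\alpha$, this is equivalent to $\|I-\alpha H\|<1$, with the factor $\alpha$ absorbed into the loss. Under this assumption the partial sums $S_i$ satisfy $(I-H)S_i = S_i - I + (I-H)^{i+1}$. Hence $H S_i = I - (I-H)^{i+1} \to I$, the series converges, and its limit is $H^{-1}$.

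\textbf{Step 4: assemble.} Substituting Steps 2 and 3 into Step 1, with every quantity evaluated at $(\boldsymbol{\lambda}', \mathbf{w}^*(\boldsymbol{\lambda}'))$, yields exactly Equation~\ref{eq:sparse-hypergradient}. In practice the series is truncated at a finite $i$, which is why the statement calls it an approximation of the inner response.
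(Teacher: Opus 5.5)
Your proposal is correct and follows the paper's derivation essentially step for step. Both use the vanishing direct term with the chain rule, apply the implicit function theorem to the stationarity condition $\partial \mathcal{L}_T/\partial \mathbf{w} = 0$ to get $\partial \mathbf{w}^*/\partial \boldsymbol{\lambda} = -H^{-1}\,\partial^2 \mathcal{L}_T/\partial \mathbf{w}\partial \boldsymbol{\lambda}^T$, and then expand $H^{-1}$ as a Neumann series. Your one addition is the explicit convergence hypothesis ($0 \prec H \preceq cI$ with $c<2$), which the paper omits when it asserts the Neumann identity, so your version is slightly more careful. That hypothesis can be met by rescaling $\mathcal{L}_T$, which changes neither $\mathbf{w}^*$ nor $H^{-1}\,\partial^2 \mathcal{L}_T/\partial \mathbf{w}\partial \boldsymbol{\lambda}^T$.
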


The derivation can be found in Appendix \ref{prf:ift}. Detailed steps to approximate hypergradient computation are as Algorithm \ref{alg:hyper-grad}.

Although the hypergradient updates continuous scores $\boldsymbol{\lambda}$, the actual sparse querying requires a binary mask $\mathbf{m} \in \{0,1\}^N$. A naive continuous relaxation (e.g., Gumbel-Softmax) fails to reduce costs, as the expectation-based loss definition \(\sum_n \lambda_n \cdot \ell_n(\hat y_n, y_n)\) inherently requires evaluating every node in the graph.  To achieve genuine efficiency, we employ a Straight-Through Estimator (STE) via the stop-gradient trick, the detailed procedure of which is illustrated in Algorithm \ref{alg:gumbel-softmax}. This allows the forward pass to perform a hard $K$-hot discrete selection, effectively pruning $(N-K)$ nodes and minimizing expensive operations (e.g., LLM queries). To ensure stable gradient estimation, we scale the softmax probability vector $\mathbf{p}$ by $K$, aligning its magnitude with the $K$-hot mask $\mathbf{h}$ (where $\sum h_i = K$). This ensures the framework is both computationally sparse and end-to-end differentiable~\cite{kool2019}. Formally:
\begin{equation}
\mathbf{m} = K \cdot \mathbf{p} + \operatorname{stop\_gradient}(\mathbf{h} - K \cdot \mathbf{p})
\end{equation}
where \(\mathbf{h}\) is the discrete $K$-hot mask obtained via Top-$K$ sampling, and \(\mathbf{p}\) is the softmax probability vector (where $\sum p_i = 1$). 
During the forward pass, the stop-gradient operation ensures $\mathbf{m} = \mathbf{h}$, restricting computation to selected samples only. During the backward pass, the gradient $\frac{\partial \mathbf{m}}{\partial \boldsymbol{\lambda}}$ is approximated by $\frac{\partial (K \cdot \mathbf{p})}{\partial \boldsymbol{\lambda}}$, providing a smooth and informative signal for optimization.

\begin{algorithm}[tb]
  \caption{Bilevel-Optimized Sparse Querying}
  \label{alg:bilevel}
  \begin{algorithmic}[1]
    \STATE \textbf{Input:} initial node importance scores \(\boldsymbol{\lambda}\), temperature \(\tau\), top-\(K\), outer steps \(T\), inner steps \(I\), dataset \(\mathcal{G} = (\mathcal{V}, A, \{s_n\}_{n \in \mathcal{V}})\)
    \STATE \textbf{Output:} optimized model parameters \(\mathbf{w}\) and node importance scores \(\boldsymbol{\lambda}\)

    \FOR{$t=1$ \textbf{to} $T$}
      \STATE \inlinecomment{Differentiable sparse selection}
      \STATE $\mathbf{m} \gets \texttt{GumbelTopK}(\boldsymbol{\lambda}, \tau, K)$ 
      \STATE \inlinecomment{Selective LLM querying}
      \STATE $\mathbf{X} \gets \texttt{SparseQuerying}(\{s_n\}_{n \in \mathcal{V}}, \mathbf{m})$ 
      
      \STATE \inlinecomment{Lower-level: Optimize GNN parameters}
      \STATE $\mathbf{w} \gets \texttt{Initialize}(\cdot)$
      \FOR{$i=1$ \textbf{to} $I$}
        \STATE $\mathcal{L}_{T} \gets \ell(\texttt{GNN}(\mathbf{X}, \mathbf{A}; \mathbf{w}), \mathbf{Y}_{T})$
        \STATE Update \(\mathbf{w}\) by gradient step \(\frac{\partial \mathcal{L}_{T}}{\partial \mathbf{w}}\)
      \ENDFOR

      \STATE \inlinecomment{Upper-level: Optimize selection strategy}
      \STATE $\mathcal{L}_{V} \gets \ell(\texttt{GNN}(\mathbf{X}, \mathbf{A}; \mathbf{w}), \mathbf{Y}_{V})$
      \STATE $\frac{\partial \mathcal{L}_{V}}{\partial \boldsymbol{\lambda}} \gets \texttt{HyperGrad}(\mathcal{L}_{V}, \mathcal{L}_{T}, \boldsymbol{\lambda}, \mathbf{w})$
      \STATE Update \(\boldsymbol{\lambda}\) by gradient step \(\frac{\partial \mathcal{L}_{V}}{\partial \boldsymbol{\lambda}}\)
      
      \STATE $\tau \gets \max(\tau \cdot \gamma, \tau_{\min})$ \COMMENT{Temperature annealing}
    \ENDFOR
\end{algorithmic}
\end{algorithm}

\noindent
By querying only \(K\) nodes per sparse stage, \method{} keeps LLM-generation cost independent of graph size, yielding the following complexity guarantee (details in Appendix~\ref{sec:appendix-complexity}).

\begin{theorem}[Time and space complexity of \method{}]\label{thm:complexity}
With a fixed optimization schedule and sparse query budget \(K \ll |\mathcal{V}|\), the dominant LLM-query time complexity of \method{} is
\[
\mathcal{O}(K \cdot C_{\text{LLM}}).
\]
Its simplified peak space complexity is
\[
\mathcal{O}(M_{\mathrm{GNN}} + M_{\mathrm{LLM}} + M_{\mathrm{LM}}),
\]
where \(C_{\text{LLM}}\) denotes the average per-query LLM generation cost, and \(M_{\mathrm{GNN}}\), \(M_{\mathrm{LLM}}\), and \(M_{\mathrm{LM}}\) denote the peak memory of the GNN, LLM, and LM components, respectively.
\end{theorem}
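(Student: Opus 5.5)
The plan is a cost-accounting argument over one run of Algorithm~\ref{alg:bilevel}. I will charge each line of the algorithm to one of four cost classes: LLM generation, LM encoding, GNN forward/backward (inner loop), and hypergradient computation (outer loop, Algorithm~\ref{alg:hyper-grad}). The optimization schedule ($T$, $I$, the Neumann truncation depth, and the number of hypergradient steps) is fixed, so all of these are $\mathcal{O}(1)$ constants with respect to $|\mathcal{V}|$ and $K$.

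\textbf{Time bound.} First, by the STE construction $\mathbf{m} = K\mathbf{p} + \operatorname{stop\_gradient}(\mathbf{h}-K\mathbf{p})$, the forward pass uses $\mathbf{m}=\mathbf{h}$. Since $\mathbf{h}$ is $K$-hot, each sparse stage has $|\mathcal{V}_{expl}|=K$. The \texttt{SparseQuerying} call therefore invokes $\mathcal{M}_{\text{LLM}}$ exactly $K$ times; every other node keeps $s'_n=s_n$.

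Second, I will argue that LLM outputs can be cached. Explanations $e_n$ depend only on $s_n$ and the frozen $\theta_{\text{LLM}}$, so a node reselected in a later stage needs no new query. With a fixed number of sparse stages $S$, the total number of LLM calls is at most $S K = \mathcal{O}(K)$, giving $\mathcal{O}(K\cdot C_{\text{LLM}})$.

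Third, I will show the remaining costs are not dominant:
\begin{itemize}
\item LM re-encoding is needed only for changed nodes (at most $K$ per stage), plus a one-time encoding of all raw texts.
\item The GNN and hypergradient costs are the usual $\mathcal{O}(|E|D+ND^2)$ per step, times constants.
\end{itemize}
These are counted as negligible relative to LLM generation because $C_{\text{LLM}}$ (autoregressive decoding with a billion-parameter model) dwarfs a per-node GNN or LM cost. This is the sense of ``dominant'' in the statement.

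\textbf{Space bound.} The components run sequentially and their parameters are frozen except $\mathbf{w}$. Peak memory is therefore the maximum, bounded by the sum, of the memory for:
\begin{itemize}
\item LLM inference on a batch ($M_{\mathrm{LLM}}$);
\item LM encoding ($M_{\mathrm{LM}}$);
\item GNN training together with hypergradient computation ($M_{\mathrm{GNN}}$).
\end{itemize}
For the last item, I will note that the Neumann-series approximation in Theorem~\ref{thm:sparse-query-hypergradient} needs only Hessian-vector products. These cost a constant multiple of one GNN backward pass and use $\mathcal{O}(|\mathbf{w}|)$ extra vectors. Storage for $\boldsymbol{\lambda}$, $\mathbf{p}$, $\mathbf{h}$ and the feature matrix $X$ is $\mathcal{O}(ND)$, which is absorbed into $M_{\mathrm{GNN}}$ as the GNN's input activations. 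This gives the simplified bound.

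The main obstacle is making ``dominant'' rigorous. The $\mathcal{O}(K\cdot C_{\text{LLM}})$ claim silently drops the one-time $\mathcal{O}(N\cdot C_{\text{LM}})$ encoding and the $\mathcal{O}(T\cdot I\cdot C_{\text{GNN}}(N,|E|))$ training term, both of which scale with graph size. I will handle this by stating the full bound
\[
\mathcal{O}\big(K C_{\text{LLM}} + N C_{\text{LM}} + T I\, C_{\text{GNN}}\big)
\]
and then invoking the regime assumption $C_{\text{LLM}} \gg C_{\text{LM}}, C_{\text{GNN}}$, so that the LLM term is the dominant one. I also need to check that the number of sparse stages is fixed rather than growing with $T$. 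If the mask is resampled every outer step, the query bound becomes $\mathcal{O}(TK\cdot C_{\text{LLM}})$, which is still $\mathcal{O}(K\cdot C_{\text{LLM}})$ under the fixed schedule, and caching only improves it.
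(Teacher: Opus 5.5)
Your accounting follows the paper's appendix argument closely. You count $K$ LLM calls per outer iteration from the $K$-hot forward mask and fix the number of outer iterations. You implement the Neumann steps as Hessian-vector products without forming the Hessian, and you process LLM and LM sequentially so that peak memory is bounded by $M_{\mathrm{GNN}}+M_{\mathrm{LLM}}+M_{\mathrm{LM}}$. The caching refinement is harmless but unnecessary; the paper simply counts $TK$ queries with $T$ fixed. Absorbing $\boldsymbol{\lambda},\mathbf{p},\mathbf{h}$ into $M_{\mathrm{GNN}}$ is actually cleaner than the paper's appeal to $|\mathcal{V}|+KD \ll M_{\mathrm{GNN}}+M_{\mathrm{LLM}}+M_{\mathrm{LM}}$. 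It needs no size assumption, because the GNN must already hold $X\in\mathbb{R}^{N\times D}$.

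The step that does not go through is your attempt to make \emph{dominant} rigorous. You start from $\mathcal{O}(K C_{\text{LLM}} + N C_{\text{LM}} + T I\, C_{\text{GNN}})$. The assumption $C_{\text{LLM}} \gg C_{\text{LM}}$ only yields $K C_{\text{LLM}} \gg K C_{\text{LM}}$. Absorbing the one-time term $N C_{\text{LM}}$ would instead require $C_{\text{LLM}}/C_{\text{LM}} \gtrsim N/K$. That ratio grows without bound under the theorem's own hypothesis $K \ll |\mathcal{V}|$. For example, with $K=10$ on ogbn-products, one 7B generation would have to cost about $2.4\times 10^{5}$ RoBERTa encodings. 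Your remark that $C_{\text{LLM}}$ dwarfs a \emph{per-node} GNN or LM cost fails for the same reason. The competing terms are whole-graph costs, while the LLM term counts only $K$ nodes.

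The paper does not attempt this comparison. It explicitly excludes the one-time $\mathcal{O}(|\mathcal{V}|\, C_{\text{LM}})$ encoding from the iterative complexity. It then writes the total as $\mathcal{O}(T(K(C_{\text{LLM}}+C_{\text{LM}}) + (I+i)C_{\text{GNN}}))$, with $C_{\text{GNN}}$ a full forward--backward pass. Only then does it invoke $C_{\text{LLM}} \gg C_{\text{LM}}, C_{\text{GNN}}$ with $T, I, i$ fixed.

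To repair your version, you have two options:
\begin{itemize}
\item Scope the claim the same way. The phrase ``LLM-query time complexity'' allows reading it as a bound on the query component only, and your count of $SK$ (or $TK$) calls already suffices for that.
\item State the needed regime explicitly as $K C_{\text{LLM}} \gtrsim N C_{\text{LM}} + T(I+i)\,C_{\text{GNN}}$. This is an additional hypothesis, not a consequence of the per-query comparison.
\end{itemize}
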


\subsection{Advantages over Alternative Approaches}


In dense explainer-style GraphLLM pipelines, the scalability bottleneck lies in the LLM-generation stage: when explanation features are generated for every node, this stage alone contributes \(O(|\mathcal{V}| \cdot C_{\text{LLM}})\) cost before downstream graph learning~\cite{he2024}. More broadly, node-level LLM augmentation in GraphLLM pipelines faces a coverage-cost trade-off, since improving coverage over graph instances typically requires more LLM-side processing~\cite{he2024,zhu2024b}.
In contrast, we introduce a \textbf{sparse selective querying mechanism that decouples the number of LLM queries from the total node count~\(N\)}. By capping invocations at a constant budget $K \ll N$, our design yields the dominant LLM-query complexity in Theorem~\ref{thm:complexity}. This substantially reduces the LLM-side overhead compared to dense node-level augmentation, enabling a more practical efficiency-performance trade-off on large graphs.
Furthermore, to maintain competitive performance, previous approach~\cite{he2024} necessitates fine-tuning the full LM parameters and operates multi-stream pipelines—maintaining separate fine-tuned LMs and GNNs for the original text and explanations, respectively. This architectural redundancy significantly escalates the per-node computational cost. Conversely, our method freezes the LM parameters and employs a unified single-stream pipeline, eliminating this redundancy to further minimize the resource consumption per node.

Another key strength of our method lies in its ability to \textbf{autonomously identify and prioritize the most informative nodes for the specific downstream task}. Unlike heuristic-based baselines that rely on predefined, static metrics to estimate node utility, our framework features a fully differentiable, end-to-end learning process for node selection. This task-driven update mechanism, guided directly by validation feedback, ensures that the limited query budget~\(K\) is dynamically allocated to nodes offering the highest marginal utility.




%% file: sections/04-exp.tex
\section{Experiment}

In this section, we conduct experiments to evaluate the proposed \method{} framework on two types of node-level tasks: node classification and node regression. Since \method{} targets practical LLM-enhanced graph learning, our main evaluation emphasizes the joint efficiency-performance trade-off. We investigate two core questions: \textbf{Q1}: How does \method{} perform overall on node classification and regression tasks? \textbf{Q2}: For large-scale graphs, how significantly does sparse querying reduce computational overhead? Additional analyses of the selection strategy, score transferability, LLM backbone robustness, and hyper-parameter sensitivity are reported in Appendix~\ref{sec: ablation-analysis}, Appendix~\ref{sec: importance-transferability-anslysis}, Appendix~\ref{sec:llm-backbone-robustness}, and Appendix~\ref{app:sensitivity_k}.

\subsection{Experimental Setup}
\textbf{Datasets.} 
We conduct the main experiments on six real-world text-attributed graph datasets covering node classification (Instagram, Photo, Computer -- homogeneous graphs from \cite{wu2025a}) and node regression (User-ltv, Item-ltv, Post-votes -- heterogeneous graphs from \cite{robinson2024}, with downsampling applied due to the large original size). Basic dataset statistics, including the number of nodes and edges and the average token count, are summarized in Table~\ref{tab:dataset-stats}. The table also includes the additional large-scale datasets used outside the main overall-performance tables: ogbn-products for the million-scale study in Section~\ref{subsec:large-scale} and arXiv for the additional scalability study in Appendix~\ref{sec: arxiv-scalability}. Detailed information and dataset splitting methods are presented in Appendix~\ref{sec: appendix-dataset}. Unless otherwise specified, all reported time metrics refer to the total end-to-end execution time, covering preprocessing, training, and testing stages.

\begin{table}[t]
\centering
\caption{Basic dataset statistics. The first six datasets correspond to the main overall-performance experiments: Photo, Instagram, and Computer are node classification tasks, while User-ltv, Item-ltv, and Post-votes are node regression tasks. User-ltv and Item-ltv represent different node types (User nodes and Book item nodes, respectively) within the same heterogeneous graph. ogbn-products is used in the million-scale study, and arXiv is reported in the additional scalability study.}
\label{tab:dataset-stats}
\small
\resizebox{\columnwidth}{!}{%
\begin{tabular}{cccc}
\hline
Dataset    & \#Nodes & \#Edges & Avg. \#Token\\ \hline
Photo      & 48,362  & 873,793 & 182.30       \\
Instagram  & 11,339  & 155,349 & 46.80        \\
Computer   & 87,229  & 1,256,548 & 111.66       \\
User-ltv   & 35,772  & 70,632  & 132.62      \\
Item-ltv   & 35,772  & 70,632  & 132.62      \\
Post-votes & 40,947  & 91,142  & 154.15      \\
ogbn-products & 2,449,029 & 123,718,152 & 145.50 \\
arXiv & 169,343 & 1,166,243 & 216.95      \\ \hline
\end{tabular}
}
\vspace{-3mm}
\end{table}

\textbf{Baselines.} To evaluate the effectiveness of our proposed \method{}, we compare it against a diverse set of baselines covering traditional GNNs, pretrained LMs, and LLM-augmented graph methods. For the node classification task, we employ twelve baselines including the classical GNN models GCN~\citep{kipf2017}, SAGE~\citep{hamilton2017}, and GAT~\citep{velickovic2018}; pretrained LM baselines SentenceBERT-66M~\cite{reimers2019}, RoBERTa-355M~\cite{liu2019a}, and BERT~\cite{devlin2019}; state-of-the-art LLM-based textual graph learning methods TAPE~\citep{he2024}, ENGINE\cite{zhu2024b}, and LLaGA\cite{chen2024c}; as well as GCN variants enhanced with LLM-derived features, namely LLMEmb and LLMExpl. Furthermore, we include LLMPred, a pure LLM instruction-tuning approach that fine-tunes the LLM to directly predict labels without a GNN component.
For the node regression task, following the RelBench benchmark where the graph is heterogeneous, we adopt the corresponding heterogeneous graph versions of GNNs (HeteroGCN, HeteroSAGE, HeteroGAT) and their LLM-enhanced variants (LLMEmb, LLMExpl). Our method (\textit{ours}) is evaluated under the same settings for both tasks. To ensure fair comparison, all experiments are run on the same local machine using self-hosted LLMs to eliminate API-induced variances. Implementation details and hyperparameter settings are provided in Appendix~\ref{sec: implementation details}.

\subsection{Overall Performance Comparison (Q1)}

\input{tables/clf_exp.tex}

\input{tables/reg_exp.tex}

We report the main results across six real-world TAG datasets involving node classification and node regression tasks, with results summarized in Table~\ref{tab:clf-exp} and Table~\ref{tab:reg-exp}.
Our empirical results demonstrate that \method{} runs up to \textbf{167.9$\times$ faster} than existing GraphLLM methods while consistently delivering \textbf{on-par or superior performance}.

\textbf{Comparison with Traditional Baselines:} 
This category includes traditional GNNs (GCN, SAGE, GAT) and pretrained LM baselines (BERT, RoBERTa, SenBERT).
\textit{\underline{Observation 1:} Traditional GNNs are efficient but suffer from limited semantic depth.} While models like GraphSAGE and GAT exhibit the lowest runtimes (averaging under 17s), their accuracy on classification tasks is significantly lower than LLM-augmented methods, with a gap exceeding 5\% on datasets like \textit{Computer}. This indicates that shallow features cannot fully capture the complex attributes in TAGs.
\textit{\underline{Observation 2:} \method{} outperforms pretrained LM methods in both accuracy and efficiency.} Pretrained LM baselines such as RoBERTa and BERT require substantial processing time (e.g., 1832.97s for RoBERTa) without yielding competitive results. In contrast, \method{} is not only faster but also provides a significant boost in predictive power by leveraging LLM-derived explanations rather than raw embeddings. For instance, in regression tasks, \method{} reduces the average MAE from 6.39 (HeteroGCN) to 5.40, demonstrating the necessity of high-level reasoning.

\textbf{Comparison with LLM-based Methods:} 
We compare \method{} against state-of-the-art GraphLLM frameworks, including TAPE, ENGINE, LLaGA, and several LLM-based baselines.
\textit{\underline{Observation 3:} \method{} resolves the computational bottleneck of LLM-augmented graph mining.} Representative methods like TAPE and LLMExpl are severely limited by the cost of repeated LLM queries, requiring over 100,000 seconds for medium-sized benchmarks. \method{} runs \textbf{167.9$\times$ faster} than TAPE while maintaining identical accuracy. This shows that our adaptive sparse querying strategy avoids redundant computations without sacrificing the "LLM gain".
\textit{\underline{Observation 4:} Superior performance in regression tasks through sparse optimization.} As shown in Table~\ref{tab:reg-exp}, \method{} achieves the \textbf{best MAE (Rank 1)} on every regression dataset. Unlike LLMExpl, which queries all nodes, our bilevel-optimized querying selectively identifies nodes where LLM insights are most beneficial. This sparse strategy not only saves time but also filters out potentially noisy or low-gain textual information, leading to more precise numerical predictions.
\textit{\underline{Observation 5:} Practicality and Scalability.} While efficiency-focused models like ENGINE and LLaGA reduce runtime to some extent, \method{} remains the fastest among all LLM-based competitors (ranking 1st in speed in both Table~\ref{tab:clf-exp} and Table~\ref{tab:reg-exp}). By bringing the total runtime closer to standard GNN training, \method{} stands out as the most viable framework for deploying GraphLLM in real-world, large-scale scenarios.

\subsection{Scaling up to Million-Scale Graph (Q2)}
\label{subsec:large-scale}

To evaluate the practical scalability of \method{}, we conduct experiments on the \textit{ogbn-products} dataset from the Open Graph Benchmark (OGB)~\cite{hu2020b}. This is a large-scale text-attributed graph comprising over 2.4 million nodes and 123.7 million edges, with an average of 145.50 tokens per node, posing a significant computational challenge for LLM-based methods. To ensure a representative comparison, we select the most efficient baselines from each category: GCN and SenBERT as classic GNN and LM-based methods, and LLaGA, LLMExpl, and LLMEmb as representative GraphLLM frameworks.

\input{tables/large_exp.tex}

The results, summarized in Table~\ref{tab:large-exp}, reveal a stark contrast in efficiency. While classic methods (GCN, SenBERT) are computationally efficient, their performance is limited by the lack of high-level reasoning. Conversely, existing GraphLLM methods, despite their potential, become computationally intractable at this scale, failing to complete within a 24-hour time limit due to the exhaustive nature of their LLM querying or embedding processes. In contrast, \method{} successfully processes the entire graph in just 7.4 hours, achieving the highest accuracy of 78.42\%. This demonstrates that our bilevel-optimized sparse querying strategy effectively breaks the scalability bottleneck, making LLM-enhanced graph learning viable for industrial-scale applications. Additional large-scale results on arXiv are reported in Appendix~\ref{sec: arxiv-scalability}.

%% file: tables/clf_exp.tex
\begin{table*}[!t]
\centering
\caption{\textbf{Performance comparison on node classification tasks with Accuracy (\%) and Total Time (s) reported.} The \colorbox{orange!25}{\textbf{best}} and \colorbox{orange!10}{second-best} accuracy results, as well as the \colorbox{blue!25}{\textbf{fastest}} and \colorbox{blue!10}{second-fastest} times, are highlighted.}
\label{tab:clf-exp}
\small
\resizebox{\textwidth}{!}{%
\begin{tabular}{lcccccccc}
\hline
\multicolumn{1}{c}{\multirow{2}{*}{Methods}} & \multicolumn{2}{c}{Photo} & \multicolumn{2}{c}{Instagram} & \multicolumn{2}{c}{Computer} & \multicolumn{2}{c}{Average} \\ \cline{2-9} 
\multicolumn{1}{c}{} & time (s) \(\downarrow\)   & acc  \(\uparrow\)      & time (s) \(\downarrow\)  & acc  \(\uparrow\)      & time (s) \(\downarrow\)   & acc        & time (s) \(\downarrow\)  & acc  \(\uparrow\) \\ \hline
GCN                                          & 15.15               & 70.63{\tiny ±0.71}          & 1.78            & 63.32{\tiny ±0.22}          & 21.47               & 71.70{\tiny ±0.77}          & 12.80              & 68.55          \\
SAGE                                         & 15.62               & 76.44{\tiny ±0.23}          & 1.72            & 63.22{\tiny ±0.41}          & 21.67               & 80.33{\tiny ±0.47}          & 13.00              & 73.33          \\
GAT                                          & 18.05               & 79.63{\tiny ±0.39}          & 2.27            & 63.39{\tiny ±0.22}          & 27.86               & 81.93{\tiny ±1.40}          & 16.06              & 74.98          \\ 
SenBERT                                  & 321.44              & 74.14{\tiny ±0.18}          & 40.08           & 62.71{\tiny ±0.58}          & 578.36              & 70.34{\tiny ±0.25}          & 313.29             & 69.06          \\
RoBERTa                                & 1,787.73            & 75.15{\tiny ±0.34}          & 427.19          & 65.33{\tiny ±0.37}          & 3,283.98            & 71.02{\tiny ±0.61}          & 1832.97            & 70.50          \\
BERT                                         & 637.51              & 73.11{\tiny ±0.09}          & 71.98           & 62.55{\tiny ±0.39}          & 1,162.61            & 68.35{\tiny ±0.62}          & 624.03             & 68.00          \\ \hline \hline
TAPE                                         & 120,802.00 & \cellcolor{orange!25} \textbf{85.79{\tiny ±0.13}} & 31,475.80       & 65.74{\tiny ±1.25}          & 149,887.00 & \cellcolor{orange!25} \textbf{89.73{\tiny ±0.06}} & 100721.60 & \cellcolor{orange!25} \textbf{80.42} \\
ENGINE                                       &  2,613.60            & 84.78{\tiny ±0.20}          &  536.01    & \cellcolor{orange!10} 67.09{\tiny ±0.38}    & 4,196.57            & 87.06{\tiny ±0.23}          & 2448.73            & 79.64          \\
LLaGA                                        & 4,288.58            & 84.66{\tiny ±0.17}          & 581.77          & 64.85{\tiny ±1.17}          & 7,445.99            & 88.02{\tiny ±0.49}          & 4105.45            & 79.18          \\
LLMExpl                                      & 116,157.00          & 84.97{\tiny ±0.25}          & 30,737.90       & 64.94{\tiny ±0.24}          & 144,291.00          & 88.39{\tiny ±0.10}          & 97061.97           & 79.43          \\
LLMEmb                                       & \cellcolor{blue!10} 2,077.30            & 85.35{\tiny ±0.24}          & \cellcolor{blue!10}  334.94 & \cellcolor{orange!25} \textbf{67.21{\tiny ±0.34}} & \cellcolor{blue!10} 3,293.02            & 88.60{\tiny ±0.15}          & \cellcolor{blue!10} 1901.75            & 80.39          \\
LLMPred                                      & 13,515.00           & 73.69{\tiny ±0.37}          & 11,794.40       & 43.69{\tiny ±0.69}          & 29,592.80           & 70.39{\tiny ±0.12}          & 18300.73           & 62.59          \\
ours                                         & \cellcolor{blue!25} \textbf{490.55}        & \cellcolor{orange!10} { 85.75{\tiny ±0.28}}    & \cellcolor{blue!25} \textbf{129.04}          &  66.22{\tiny ±0.55}          & \cellcolor{blue!25} \textbf{1,180.41}      & \cellcolor{orange!10} { 89.30{\tiny ±0.11}}    & \cellcolor{blue!25} \textbf{600.00}       & \cellcolor{orange!25} \textbf{80.42}    \\ \hline
\end{tabular}
}
\end{table*}

%% file: tables/reg_exp.tex
\begin{table*}[!t]
\centering
\caption{\textbf{Performance comparison on node regression tasks with Mean Absolute Error (MAE) and Total Time (s) reported.} The \colorbox{orange!25}{\textbf{best}} and \colorbox{orange!10}{second-best} MAE results, as well as the \colorbox{blue!25}{\textbf{fastest}} and \colorbox{blue!10}{second-fastest} times, are highlighted.}
\label{tab:reg-exp}
\small
\resizebox{\textwidth}{!}{%
\begin{tabular}{lcccccccc}
\hline
\multirow{2}{*}{Methods} & \multicolumn{2}{c}{User-ltv}                 & \multicolumn{2}{c}{Item-ltv}                  & \multicolumn{2}{c}{Post-votes}               & \multicolumn{2}{c}{Average}        \\ \cline{2-9} 
                         & time (s) \(\downarrow\)        & mae \(\downarrow\)           & time (s) \(\downarrow\)        & mae \(\downarrow\)            & time (s) \(\downarrow\)        & mae \(\downarrow\)           & time (s) \(\downarrow\)        & mae \(\downarrow\) \\ \hline
HeteroGCN                & 13.71           & 1.39{\tiny ±0.06}          & 13.67           & 17.67{\tiny ±0.29}          & 15.77           & 0.11{\tiny ±0.03}          & 14.38           & 6.39             \\
HeteroSAGE               & 13.82           & 1.37{\tiny ±0.01}          & 13.77           & 17.77{\tiny ±0.30}          & 16.18           & 0.12{\tiny ±0.04}          & 14.59           & 6.42             \\
HeteroGAT                & 13.74           & 1.39{\tiny ±0.06}          & 13.68           & 17.67{\tiny ±0.29}          & 15.76           & 0.11{\tiny ±0.03}          & 14.39           & 6.39             \\ \hline \hline
LLMEmb                   & \cellcolor{blue!10} 1,430.20        & 1.35{\tiny ±0.03}          & \cellcolor{blue!10} 1,434.41        & 15.64{\tiny ±0.19}          & \cellcolor{blue!10} 1,713.90        & 0.10{\tiny ±0.03}          & \cellcolor{blue!10} 1526.17         & 5.70             \\
LLMExpl                  & 32,672.90       & \cellcolor{orange!10} 1.34{\tiny ±0.03}          & 34,600.20       & \cellcolor{orange!10} 15.40{\tiny ±0.07}          & 32,230.40       & \cellcolor{orange!10} 0.09{\tiny ±0.03}          & 33167.83        & \cellcolor{orange!10} 5.61             \\
ours                     & \cellcolor{blue!25} \textbf{419.03} & \cellcolor{orange!25} \textbf{1.33{\tiny ±0.02}} &\cellcolor{blue!25} \textbf{421.62} & \cellcolor{orange!25} \textbf{14.79{\tiny ±0.01}} & \cellcolor{blue!25} \textbf{494.32} & \cellcolor{orange!25} \textbf{0.08{\tiny ±0.02}} & \cellcolor{blue!25} \textbf{444.99} & \cellcolor{orange!25} \textbf{5.40}    \\ \hline
\end{tabular}
}
\end{table*}

%% file: tables/large_exp.tex
\begin{table}[h] 
\centering
\vspace{-2mm}
\caption{\textbf{Scalability and efficiency analysis on the million-scale ogbn-products dataset.} '---' indicates the method exceeded the 24-hour time limit. \method{} achieves the best accuracy while remaining computationally feasible.}
\label{tab:large-exp}
\small
\begin{tabular}{lccc}
\toprule
\textbf{Methods} & \textbf{Accuracy (\%)} & \textbf{Time (hrs)} & \textbf{Efficiency} \\ 
\midrule
GCN     & 69.01 & 0.1 & \checkmark \\
SenBERT & 77.10 & 2.9 & \checkmark \\ 
\midrule
LLaGA   & ---   & $>24$ & \ding{55} \\
LLMExpl & ---   & $>24$ & \ding{55} \\
LLMEmb  & ---   & $>24$ & \ding{55} \\
\rowcolor[gray]{0.9} \textbf{Ours} & \textbf{78.42} & \textbf{7.4} & \checkmark \\ 
\bottomrule
\end{tabular}
\end{table}

%% file: sections/02-rw.tex
\section{Related Work}

\textbf{Graph Neural Networks.} GNNs~\cite{kipf2017,velickovic2018} are standard for graph learning, and TAG methods increasingly use PLM embeddings~\cite{devlin2019,liu2019a} or PLM-GNN interaction~\cite{zhao2023,jin2023a}. However, prior studies note that these designs can still incur prohibitive computation~\cite{he2024,zhu2024b}.

\textbf{Graph with Large Language Models.} Existing GraphLLM methods for node-level tasks fall into three categories based on LLM roles~\cite{wu2025a} : (1) LLM as predictor, linearizing entire graphs for direct LLM predictions (e.g., LLaGA~\cite{chen2024c}, GraphGPT~\cite{tang2024b}); (2) LLM as encoder, jointly embedding node text and context (e.g., ENGINE~\cite{zhu2024b}); (3) LLM as explainer (our focus), where LLMs provide auxiliary explanation features to enhance GNN inputs (e.g., TAPE~\cite{he2024}). 

\textbf{Bilevel Optimization in Machine Learning.} Bilevel optimization addresses nested problems where an outer objective depends on an inner solution~\cite{zhang2024j}, and is widely used in hyperparameter tuning~\cite{lorraine2020} and meta-learning~\cite{finn2017}. We pioneer the use of bilevel optimization in the \emph{efficient GraphLLM} domain, focusing on selective LLM querying in graphs. This formulation directly aligns query selection with task utility and inherently promotes sparsity, addressing both efficiency and effectiveness.

\textbf{Node-level Tasks on TAGs.} Most existing works on TAGs focus primarily on \textit{node classification}~\cite{chen2022a,chen2025a,deng2024}. However, with the growing interest in representation learning for Relational Deep Learning tasks, \textit{node regression}—predicting continuous node attributes—has attracted increasing attention due to its practical relevance in domains such as e-commerce systems~\cite{robinson2024}. To validate the generality of \method{}, we conduct experiments on both task types across diverse real-world TAG datasets.

%% file: sections/05-conclusion.tex
\vspace{-2mm}
\section{Conclusion}

We proposed \method{}, a bilevel-optimized sparse querying framework that reduces the scalability bottleneck of GraphLLMs by invoking LLMs only for task-critical nodes. Across classification, regression, and million-scale graph experiments, \method{} runs substantially faster while maintaining competitive or superior predictive performance, making LLM-enhanced graph learning more practical.

%% file: sections/06-appendix.tex
\section{Derivation of Sparse Query Hypergradient}\label{prf:ift}

\begin{proof}[Derivation]
We derive the sparse-query hypergradient in Equation~\ref{eq:sparse-hypergradient} here. For a given selector score \(\boldsymbol{\lambda}'\), let \(\mathbf{w}' = \mathbf{w}^*(\boldsymbol{\lambda}')\) denote the corresponding inner-loop solution. We assume that \(\mathbf{w}'\) satisfies the stationarity condition
\[
\left.\frac{\partial \mathcal{L}_T}{\partial \mathbf{w}}\right|_{(\boldsymbol{\lambda}', \mathbf{w}')} = 0,
\]
and that the training Hessian \(\mathbf{H}_{\mathbf{w}\mathbf{w}} := \frac{\partial^2 \mathcal{L}_T}{\partial \mathbf{w} \partial \mathbf{w}^T}\) is locally invertible at \((\boldsymbol{\lambda}', \mathbf{w}')\).

Define the vector-valued function
\[
F(\boldsymbol{\lambda}, \mathbf{w}) := \frac{\partial \mathcal{L}_{T}}{\partial \mathbf{w}}(\boldsymbol{\lambda}, \mathbf{w}).
\]
By assumption, \(F(\boldsymbol{\lambda}', \mathbf{w}') = 0\). The Implicit Function Theorem states that if the Jacobian of \(F\) with respect to \(\mathbf{w}\) at \((\boldsymbol{\lambda}', \mathbf{w}')\), i.e.,
\[
\begin{aligned}
J_{\mathbf{w}} F(\boldsymbol{\lambda}', \mathbf{w}')
&:= \frac{\partial F}{\partial \mathbf{w}}(\boldsymbol{\lambda}', \mathbf{w}') \\
&= \frac{\partial^{2} \mathcal{L}_{T}}
{\partial \mathbf{w} \partial \mathbf{w}^T}
(\boldsymbol{\lambda}', \mathbf{w}'),
\end{aligned}
\]
is invertible, then there exists a neighborhood of \(\boldsymbol{\lambda}'\) and a unique differentiable function \(\mathbf{w}^*(\boldsymbol{\lambda})\) such that
\[
F(\boldsymbol{\lambda}, \mathbf{w}^*(\boldsymbol{\lambda})) = 0.
\]

Differentiating this identity with respect to \(\boldsymbol{\lambda}\) yields
\[
\frac{\partial F}{\partial \boldsymbol{\lambda}} + \frac{\partial F}{\partial \mathbf{w}} \frac{\partial \mathbf{w}^*}{\partial \boldsymbol{\lambda}} = 0.
\]
Solving for the Jacobian of the optimal parameters gives
\[
\frac{\partial \mathbf{w}^*}{\partial \boldsymbol{\lambda}} = - \left( \frac{\partial F}{\partial \mathbf{w}} \right)^{-1} \times \frac{\partial F}{\partial \boldsymbol{\lambda}}.
\]

Substituting back \(F = \partial \mathcal{L}_T / \partial \mathbf{w}\), we have:
\[
\frac{\partial \mathbf{w}^*}{\partial \boldsymbol{\lambda}} = - \left[ \frac{\partial^{2} \mathcal{L}_{T}}{\partial \mathbf{w} \partial \mathbf{w}^T} \right]^{-1} \times \frac{\partial}{\partial \boldsymbol{\lambda}} \left( \frac{\partial \mathcal{L}_{T}}{\partial \mathbf{w}} \right).
\]

To align with the notation in Equation~\ref{eq:sparse-hypergradient}, we note that the matrix \(\frac{\partial}{\partial \boldsymbol{\lambda}} \left( \frac{\partial \mathcal{L}_{T}}{\partial \mathbf{w}} \right)\) has entries \(\left[ \frac{\partial}{\partial \boldsymbol{\lambda}} \left( \frac{\partial \mathcal{L}_{T}}{\partial \mathbf{w}} \right) \right]_{i,j} = \frac{\partial^2 \mathcal{L}_T}{\partial w_i \partial \lambda_j}\). This matrix is commonly denoted by \(\frac{\partial^2 \mathcal{L}_{T}}{\partial \mathbf{w} \partial \boldsymbol{\lambda}^T}\). Therefore,
\[
\frac{\partial \mathbf{w}^*}{\partial \boldsymbol{\lambda}} 
= - \left[ \frac{\partial^{2} \mathcal{L}_{T}}{\partial \mathbf{w} \partial \mathbf{w}^T} \right]^{-1} \times \frac{\partial^{2} \mathcal{L}_{T}}{\partial \mathbf{w} \partial \boldsymbol{\lambda}^T}.
\]

Since the exact inverse of the Hessian is computationally expensive, it can be approximated by an infinite series:

\begin{equation}
\left[ \frac{\partial^2 \mathcal{L}_{T}}{\partial \mathbf{w} \partial \mathbf{w}^T} \right]^{-1}
= 
\lim_{i \to \infty} \sum_{j=0}^i \left( I - \frac{\partial^2 \mathcal{L}_{T}}{\partial \mathbf{w} \partial \mathbf{w}^T} \right)^j 
\end{equation}

Substitute it into the above equation to get the final result:

\[
\frac{\partial \mathbf{w}^*}{\partial \boldsymbol{\lambda}} 
= - \lim_{i \to \infty} \sum_{j=0}^i \left( I - \frac{\partial^2 \mathcal{L}_{T}}{\partial \mathbf{w} \partial \mathbf{w}^T} \right)^j \times \frac{\partial^{2} \mathcal{L}_{T}}{\partial \mathbf{w} \partial \boldsymbol{\lambda}^T}.
\]

Since \(\mathcal{L}_{V}^{*}(\boldsymbol{\lambda}) = \mathcal{L}_{V}(\mathbf{w}^{*}(\boldsymbol{\lambda}))\) under our validation objective, the direct derivative with respect to \(\boldsymbol{\lambda}\) is zero and the chain rule gives
\[
\frac{\partial \mathcal{L}_{V}^{*}}
{\partial \boldsymbol{\lambda}}
=
\frac{\partial \mathcal{L}_{V}}
{\partial \mathbf{w}^{*}}
\frac{\partial \mathbf{w}^{*}}
{\partial \boldsymbol{\lambda}}.
\]
Substituting the Neumann-series approximation above yields the stated sparse-query hypergradient.
This completes the derivation.
\end{proof}

\section{Dataset Details}\label{sec: appendix-dataset}

\subsection{Node Classification Datasets}

The Ele-Photo (abbreviated as \textbf{Photo}) and Ele-Computer (abbreviated as \textbf{Computer}) datasets are derived from the Amazon Electronics dataset~\cite{ni2019}, where each node represents an item in either the Photo or Computer category. In these e-commerce networks, edges indicate co-purchase or co-view relationships between items. The associated text for each item comprises product descriptions, such as summaries or user reviews. The classification task involves categorizing these products into fine-grained sub-categories.
The \textbf{Instagram} dataset, originally released in~\cite{huang2024e}, features nodes representing users, with edges denoting social connections such as following relationships. Each Instagram node includes textual features extracted from the user’s profile page introduction. The classification task labels specify whether a user is a commercial or a normal user.

Table~\ref{tab:label-space} summarizes the label space for each dataset.

\begin{table*}[t]
\centering
\caption{Label space of classification datasets.}
\label{tab:label-space}
\small
\setlength{\tabcolsep}{4pt}
\begin{tabular}{p{0.12\textwidth}|p{0.82\textwidth}}
\hline
\textbf{Dataset} & \multicolumn{1}{c}{\textbf{Label Space}}                          \\ \hline
\multirow{3}{*}{Photo}    & Lighting \& Studio, Bags \& Cases, Tripods \& Monopods, Flashes,              \\
                 & Video Surveillance, Accessories, Binoculars \& Scopes, Video,     \\
                 & Digital Cameras, Film Photography, Lenses, Underwater Photography \\ \hline
\multirow{3}{*}{Computer} & Computer Accessories \& Peripherals, Tablet Accessories, Laptop Accessories,  \\
                          & Computers \& Tablets, Computer Components, Data Storage, Networking Products, \\
                 & Monitors, Servers, Tablet Replacement Parts                       \\ \hline
Instagram        & Normal User, Commercial User                                      \\ \hline
\end{tabular}
\end{table*}

\textbf{Train-Test Split.} We follow a semi-supervised learning setup, where only a small subset of nodes is labeled to imitate real-world scenarios with limited annotation. This setting evaluates the model’s ability to leverage scarce labeled data effectively. Specifically, we adopt a 10\% training data split on Instagram, Photo, and Computer datasets, consistent with the experimental protocol used in LLMNodeBed~\cite{wu2025a}.

\subsection{Node Regression Datasets}

\textbf{User-ltv} is derived from the Rel-amazon dataset, which contains comprehensive records of products, users, and reviews on Amazon’s platform focusing on book-related products. Each node represents a user, enriched with textual information from their reviews and purchase history. In this regression task, the objective is to predict the total monetary value (\$) of all products a user will buy and review over the next 3 months.
\textbf{Item-ltv} is also extracted from the same Rel-amazon subset, where nodes correspond to individual products in the book category. Product nodes include textual features such as product descriptions and user reviews. The regression target is to forecast the total monetary value (\$) of purchases and reviews that each product will receive in the coming 3 months.
\textbf{Post-votes} originates from the Stack Exchange stats-exchange site. Here, nodes represent individual user posts, accompanied by rich textual content such as raw text in posts and comments, as well as metadata like edit histories and voting records. The regression task \textit{post-votes} aims to predict the number of votes each post will receive in the next 3 months based on its current state and textual information.

\textbf{Data Downsampling.} Random downsampling is applied to all three node regression datasets to reduce size while preserving representativity. Isolated nodes caused by this process are subsequently removed to maintain graph connectivity.

\textbf{Train-Test Split.} We follow the RelBench~\cite{robinson2024} temporal splitting scheme: training data includes all records up to the validation timestamp (VAL\_TIMESTAMP), validation data covers from VAL\_TIMESTAMP to the test timestamp (TEST\_TIMESTAMP), and testing data includes all records after TEST\_TIMESTAMP.

\paragraph{Artifact use and data considerations.}
All datasets used in this paper are existing research benchmarks released by their original creators. We use them only for the intended purpose of graph learning evaluation, follow the splits and preprocessing protocols specified by the corresponding benchmark papers, and direct users of our released code to obtain the datasets from the original sources and comply with their licenses, terms of use, and access conditions. We do not create a new dataset, collect new human-subject data, or redistribute raw user-generated text. Because several benchmarks contain product reviews, profile descriptions, posts, comments, or related metadata, they may include personally identifying or offensive content inherited from the original releases. Our experiments treat these fields as benchmark text attributes for model evaluation only, do not attempt to identify individuals, and do not use the derived artifacts outside research contexts.

The detailed statistics and temporal information of these datasets are summarized in Table~\ref{tab:reg-dataset-details}.

\begin{table}[htbp]
\centering
\caption{Detailed Regression Dataset information}
\label{tab:reg-dataset-details}
\small
\resizebox{\columnwidth}{!}{%
\begin{tabular}{lccc}
\toprule
 & \textbf{User-ltv} & \textbf{Item-ltv} & \textbf{Post-votes} \\ 
\midrule
\# Nodes      & 35,772  & 35,772  & 40,947  \\
\# Edges      & 70,632  & 70,632  & 91,142  \\
\# node types & 3       & 3       & 7       \\
\# edge types & 4       & 4       & 22      \\
Start Timestamp & 2008-01-01 & 2008-01-01 & 2009-02-02 \\
Val Timestamp & 2015-10-01 & 2015-10-01 & 2020-10-01 \\
Test Timestamp & 2016-01-01 & 2016-01-01 & 2021-01-01 \\
End Timestamp & 2018-09-28 & 2018-09-28 & 2023-09-03 \\
\bottomrule
\end{tabular}
}
\end{table}

\section{Implementation Details}\label{sec: implementation details}
In this section, we provide the detailed experimental setup, including the hyperparameter search space, model-specific configurations, and the prompting strategy used for LLM querying.

\subsection{Hyperparamter settings}

\textbf{Common Experimental Setup.} To ensure a rigorous and fair evaluation of computational efficiency, all experiments are conducted on a Linux server equipped with a NVIDIA 48GB L40 GPU and an Intel(R) Xeon(R) Platinum 8358 CPU (64 cores). Our code is developed based on LLMNodeBed~\cite{wu2025a}. Notably, rather than relying on commercial cloud-based APIs (e.g., GPT-4), which are subject to unpredictable network latency and heterogeneous underlying hardware, we employ locally deployed LLMs for all text generation tasks. This setup ensures that the reported faster runtimes are attributable to the algorithmic improvements of our adaptive sparse querying strategy rather than external infrastructure variances. We use Mistral-7B as the default LLM as done in LLMNodeBed; this choice provides a controlled local efficiency setting rather than a restriction of the method itself. Additional backbone robustness results are reported in Appendix~\ref{sec:llm-backbone-robustness}.

\textbf{Node Classification Task.} We perform a grid search to optimize the performance of each model. We denote the number of layers, hidden dimension, and learning rate by $L$, $H$, and $\eta$, respectively.

\begin{itemize}

\item \textbf{GNNs:} The number of GNN layers is searched among $\{1, 2, 3\}$, and the hidden dimension is selected from $\{128, 256\}$. The learning rate is tuned within $\{0.005, 0.01\}$. Following LLMNodeBed, we set the maximum number of training epochs to 500 with an early stopping patience of 100. We find the best performance with $(L,H,\eta)=(3,128,0.005)$ for GCN on Photo and Computer, $(1,128,0.005)$ for GCN on Instagram, $(3,128,0.01)$ for GraphSAGE on Photo, $(3,128,0.005)$ for GraphSAGE on Computer and Instagram, $(3,128,0.005)$ for GAT on Photo, $(3,128,0.01)$ for GAT on Computer, and $(2,256,0.01)$ for GAT on Instagram.

\item \textbf{Pretrained LMs:} Due to the high computational cost of performing a grid search for fine-tuning language models, we adopt the default hyperparameters from LLMNodeBed, specifically using 10 epochs and a learning rate of $2 \times 10^{-5}$.

\item \textbf{ENGINE:} Following the original paper, we search for the number of layers in $\{1, 2, 3\}$, the hidden dimension in $\{64, 128\}$, and the learning rate in $\{5 \times 10^{-4}, 1 \times 10^{-3}\}$. We find the best performance with $(L,H,\eta)=(2,128,5\times10^{-4})$ on Photo, Computer, and Instagram.

\item \textbf{TAPE:} We utilize Mistral-7B for explanation generation. To ensure reproducibility and manage computational overhead, we use the pre-generated explanations provided by LLMNodeBed. The total explanation generation time is estimated by multiplying the total token count by the per-token latency measured on our local machine. For the GNN component, we search the number of layers in $\{2, 3, 4\}$, hidden dimensions in $\{128, 256\}$, and learning rates in $\{5 \times 10^{-4}, 1 \times 10^{-3}\}$. The LM (RoBERTa-large, 355M) is fine-tuned using its default parameters. We find the best performance with $(L,H,\eta)=(2,128,10^{-3})$ on Photo, $(4,256,5\times10^{-4})$ on Computer, and $(4,256,10^{-3})$ on Instagram.

\item \textbf{LLaGA:} Following the settings of LLMNodeBed, we adopt the HO templates as the default configuration with the number of hops set to 4. RoBERTa-355M is employed as the text encoder, and the linear projection layer is implemented as a 2-layer MLP with a hidden dimension of 2048. We set the batch size to 64 and the learning rate to $10^{-4}$. The number of training epochs is 10 for Instagram and 8 for the larger Computer and Photo datasets.

\item \textbf{LLMEmb:} This method generates node embeddings by mean-pooling the hidden states from the final layer of the LLM. For the GNN backbone, we perform a grid search over the number of layers in $\{1, 2, 3\}$, hidden dimensions in $\{128, 256\}$, and learning rates in $\{0.005, 0.01\}$. We find the best performance with $(L,H,\eta)=(3,256,0.005)$ on Photo and Computer, and $(2,128,0.005)$ on Instagram.

\item \textbf{LLMExpl:} This approach concatenates the LLM-generated explanations with the original text attributes and subsequently utilizes a frozen LM (RoBERTa-large, 355M) to encode the augmented text into node embeddings. The GNN backbone's hyperparameters are tuned across the same search space as LLMEmb: layers in $\{1, 2, 3\}$, hidden dimensions in $\{128, 256\}$, and learning rates in $\{0.005, 0.01\}$. We find the best performance with $(L,H,\eta)=(3,256,0.005)$ on Photo and Computer, and $(1,256,0.005)$ on Instagram.

\item \textbf{LLMPred:} This baseline corresponds to the LLM Instruction Tuning approach from LLMNodeBed. The method directly fine-tunes the LLM to predict node labels without incorporating any GNN components. We follow the default hyperparameters specified in LLMNodeBed: the LoRA rank $r$ is set to 8, and the scaling factor $\alpha$ is set to 16. We use a dropout ratio of 0.1 and a learning rate of $10^{-5}$. For each dataset, the input consists of the node's original text concatenated with a task-specific prompt designed to guide the LLM in classification, while the expected output is the categorical label. We train the model for 8 epochs on Instagram and 2 epochs on Computer and Photo, adjusting for the larger scale of the latter two datasets.

\item \textbf{\method{} (Ours):}  For our framework, the GNN backbone's hyperparameter search space includes the number of layers in $\{2, 3\}$ and hidden dimensions in $\{128, 256\}$, with a fixed learning rate of 0.01. For the bilevel optimization, we search the temperature $\tau$ within $\{2.0, 4.0\}$ and the multiplier $\lambda$'s learning rate within $\{10^{-3}, 10^{-2}\}$. We fix the sparsity $k=10$ and the minimum temperature $\tau_{min}=0.5$. We conduct a sensitivity analysis on the hyperparameter k in Appendix~\ref{app:sensitivity_k}. The results show that 10 is a reasonable value. Consistent with standard practice, we employ RoBERTa-large (355M) as the default language model to encode text to node embeddings. The outer loop number of iterations $T$ is set to 3, and the inner loop number of iterations $I$ is set to 200 and early stopping patience of 50. The Neumann series approximation uses $i=10$ steps. We find the best performance with $(L,H,\eta,\tau,\eta_{\lambda})=(3,256,0.01,4.0,10^{-2})$ on Photo and Computer, and $(3,128,0.01,4.0,10^{-3})$ on Instagram.

\end{itemize}

\textbf{Node Regression Task.} For regression, the GNN baselines, LLMEmb, and LLMExpl follow the same hyperparameter search space as the node classification task. For \method{}, we set the sparsity $k=10$, temperature $\tau=4.0$, $\tau_{\min}=0.5$, and the learning rate for $\lambda$ to $0.005$. The GNN backbone hyperparameters are searched over $\{1, 2, 3\}$ layers, $\{128, 256\}$ hidden dimensions, and a learning rate range of $\{0.005, 0.01\}$. For GNNs, we find the best performance with $(L,H,\eta)=(2,128,0.005)$ on user-ltv, item-ltv, and post-votes. For LLMEmb and LLMExpl, we find the best performance with $(L,H,\eta)=(2,128,0.005)$ on user-ltv, $(2,128,0.01)$ on item-ltv, and $(3,128,0.005)$ on post-votes. For \method{}, we find the best performance with $(L,H,\eta)=(2,128,0.005)$ on user-ltv, $(2,256,0.01)$ on item-ltv, and $(1,128,0.01)$ on post-votes.

\textbf{Analysis of Text Augmentation Strategy.} In the preliminary stage, we compared two ways to incorporate LLM-derived explanations: (1) concatenating the original text $s_n$ with the explanation $e_n$, and (2) replacing $s_n$ with $e_n$. Our empirical results indicated that direct replacement ($s_n' \leftarrow e_n$) yields superior performance. We speculate that concatenation may introduce noise or redundant information from the raw text, which could dilute the task-specific signals present in the LLM-generated explanations during the embedding process.

\subsection{Additional Large-Scale Results on arXiv}
\label{sec: arxiv-scalability}

To further evaluate scalability beyond ogbn-products, we conduct an additional experiment on the arXiv citation graph, which contains 169,343 nodes and 1,166,243 edges, with an average of 216.95 tokens per raw text. As shown in Table~\ref{tab:arxiv-scalability}, \method{} remains close to the strongest LLM-based baselines in accuracy while requiring substantially less runtime. This result supports the main conclusion that sparse querying improves the efficiency-performance trade-off on large text-attributed graphs.

\begin{table}[htbp]
\centering
\caption{Additional scalability results on arXiv. Runtime is reported in seconds.}
\label{tab:arxiv-scalability}
\small
\begin{tabular}{lcc}
\toprule
Method & Time (s) $\downarrow$ & Accuracy (\%) $\uparrow$ \\
\midrule
GCN & 26.71 & 70.30 \\
SenBERT & 3355.02 & 72.61 \\
LLaGA & 37208.20 & 73.99 \\
LLMExpl & 522727.04 & 73.37 \\
LLMEmb & 9125.84 & 73.97 \\
\rowcolor[gray]{0.9} \textbf{Ours} & \textbf{1363.36} & 73.89 \\
\bottomrule
\end{tabular}
\end{table}

\subsection{Robustness to LLM Backbones}
\label{sec:llm-backbone-robustness}

Table~\ref{tab:llm-backbone} evaluates \method{} with different LLM backbones for explanation generation. Accuracy remains stable across local and API-served models, suggesting that the selection strategy does not depend on a specific LLM architecture. For API-served models, runtime is omitted because it is confounded by network and service latency rather than local algorithmic cost.

\begin{table*}[htbp]
\centering
\caption{Robustness to different LLM backbones. Runtime is reported only for locally deployed models.}
\label{tab:llm-backbone}
\small
\resizebox{\textwidth}{!}{%
\begin{tabular}{lcccccc}
\toprule
Backbone & Photo Time (s) & Photo Acc. & Instagram Time (s) & Instagram Acc. & Computer Time (s) & Computer Acc. \\
\midrule
Mistral-7B (local) & 490.55 & 85.75 $\pm$ 0.28 & 129.04 & 66.22 $\pm$ 0.55 & 1180.41 & 89.30 $\pm$ 0.11 \\
Qwen3-8B (local) & 449.34 & 85.76 $\pm$ 0.27 & 150.66 & 65.89 $\pm$ 0.50 & 682.20 & 89.28 $\pm$ 0.13 \\
Qwen3-27B (openrouter api) & -- & 85.77 $\pm$ 0.20 & -- & 65.88 $\pm$ 0.30 & -- & 89.25 $\pm$ 0.12 \\
MiniMax-M2.7 (openrouter api) & -- & 85.80 $\pm$ 0.19 & -- & 65.81 $\pm$ 0.49 & -- & 89.27 $\pm$ 0.12 \\
\bottomrule
\end{tabular}
}
\end{table*}

\subsection{LLM Querying and Output Length Control}
In line with prior observations that large language models often generate overly verbose outputs~\cite{nayab2025}, we control the response length through meticulous prompt design to improve conciseness and computational efficiency.
Specifically, we instruct the LLM to limit its reasoning to at most three potential categories for classification tasks or three representative numerical values for regression tasks, and to restrict the total answer length within a fixed budget of $\mathcal{B}$ words. In our implementation, $\mathcal{B} = 50$.

\subsection{Prompts}

To ensure the reproducibility and consistency of the explanation features, we employ a unified prompt structure across all datasets. The prompt consists of a task-specific question followed by a fixed set of reasoning and formatting constraints. All prompts follow the template defined below. The placeholder [Task-specific
Question] is replaced by the corresponding query for each dataset as listed in Table~\ref{tab:prompt_questions}.

\begin{promptbox}{General Prompt Structure}
\textbf{Question:} [Task-specific Question] If multiple [options/values] apply, provide a comma-separated list ordered from most to least related, then for each choice, explain how it is present in the text. Limit the output to three categories or numerical values and keep the answer within 50 words. \textbf{Answer:}
\end{promptbox}

\begin{table*}[t]
\centering
\caption{Task-specific question components for the unified LLM prompt template.}
\label{tab:prompt_questions}
\small
\setlength{\tabcolsep}{4pt}
\renewcommand{\arraystretch}{1.2}
\begin{tabular}{@{}p{0.20\textwidth}p{0.11\textwidth}p{0.62\textwidth}@{}}
\toprule
Dataset & Output & Task-specific question \\
\midrule
\textit{Amazon User LTV} & Numeric & What is the total \$ value of purchases this customer node will make in the next 3 months? \\
\midrule
\textit{Amazon Item LTV} & Numeric & What is the total \$ value of purchases this product node will receive in the next 3 months? \\
\midrule
\textit{Stack Post Votes} & Numeric & What is the total number of votes this post node will receive in the next 3 months? \\
\midrule
\textit{Instagram} & Category & Which of the following categories does this user on Instagram belong to: Normal Users, Commercial Users? \\
\midrule
\textit{Computer} & Category & Which of the following sub-categories of computer items does this item belong to: Computer Accessories \& Peripherals, Tablet Accessories, Laptop Accessories, Computers \& Tablets, Computer Components, Data Storage, Networking Products, Monitors, Servers, Tablet Replacement Parts? \\
\midrule
\textit{Photo} & Category & Which of the following sub-categories of photo items does this item belong to: Video Surveillance, Accessories, Binoculars \& Scopes, Video, Lighting \& Studio, Bags \& Cases, Tripods \& Monopods, Flashes, Digital Cameras, Film Photography, Lenses, Underwater Photography? \\
\bottomrule
\end{tabular}
\end{table*}

\subsection{Pseudocode}
Algorithm~\ref{alg:hyper-grad} shows the approximate hypergradient computation via Neumann series for efficient bilevel optimization. 
Algorithm~\ref{alg:gumbel-softmax} shows the Gumbel-Softmax Top-$K$ sampling process with the Straight-Through Estimator (STE) to ensure differentiability in sparse node selection.

\begin{center}
\begin{minipage}{\columnwidth}
  \refstepcounter{algorithm}\label{alg:hyper-grad}
  \hrule
  \vspace{1pt}
  \noindent\textbf{Algorithm \thealgorithm} \(\texttt{HyperGrad}\): Approximate Hypergradient Computation
  \vspace{1pt}
  \hrule
  \scriptsize
  \begin{algorithmic}[1]
    \STATE \textbf{Input:} validation loss \(\mathcal{L}_{V}\), training loss \(\mathcal{L}_{T}\)
    \STATE \hspace{1.5em} scores \(\boldsymbol{\lambda}\), model parameters \(\mathbf{w}\)
    \STATE \textbf{Output:} hypergradient \(\partial \mathcal{L}_{V}/\partial \boldsymbol{\lambda}\)

    \STATE \(v \gets \frac{ \partial \mathcal{L}_{V} }{ \partial \mathbf{w} }\)
    \STATE \(p \gets v\)
    \STATE \(H \gets \partial^2\mathcal{L}_{T}/\partial \mathbf{w}\partial \mathbf{w}^{T}\)
    \STATE \(B \gets \partial^2\mathcal{L}_{T}/\partial \mathbf{w}\partial \boldsymbol{\lambda}^{T}\)
    \FOR{\(j = 0\) \textbf{to} \(i\)}
      \STATE \(v \gets v \times (I - H)\)
      \STATE \(p \gets p + v\)
    \ENDFOR
    \STATE \(\frac{\partial \mathcal{L}_{V}}{\partial \boldsymbol{\lambda}} \gets -p \times B\)
    \STATE \textbf{return} \(\partial \mathcal{L}_{V}/\partial \boldsymbol{\lambda}\)
  \end{algorithmic}
  \hrule
\end{minipage}
\end{center}

\begin{center}
\begin{minipage}{\columnwidth}
  \refstepcounter{algorithm}\label{alg:gumbel-softmax}
  \hrule
  \vspace{1pt}
  \noindent\textbf{Algorithm \thealgorithm} \(\texttt{GumbelTopK}\): Gumbel-Softmax Top-\(K\) Sampling with STE
  \vspace{1pt}
  \hrule
  \scriptsize
  \begin{algorithmic}[1]
    \renewcommand{\algorithmicrequire}{\textbf{Input:}}
    \renewcommand{\algorithmicensure}{\textbf{Output:}}
    \REQUIRE scores $\boldsymbol{\lambda} = \{\lambda_1, \dots, \lambda_N\}$, temperature $\tau$, sparsity $K$
    \ENSURE differentiable mask $\mathbf{m} \in [0,1]^N$
    
    \STATE \inlinecomment{Perturbation and Relaxation}
    
    
    \STATE $\mathbf{g} \leftarrow [g_i \sim \operatorname{Gumbel}(0,1)]_{i=1}^{N}$
    \STATE $\mathbf{z} \leftarrow (\boldsymbol{\lambda} + \mathbf{g}) / \tau$
    \STATE $\mathbf{p} \leftarrow \operatorname{Softmax}(\mathbf{z})$

    \STATE \inlinecomment{Discretization via Top-K}
    \STATE $\mathcal{S} \leftarrow\operatorname{arg-top-}K(\{p_i\}_{i=1}^N)$
    \STATE $h_i \leftarrow\mathbf{1}[i \in \mathcal{S}]$ for all $i \in \{1, \dots, N\}$
    
    \STATE \inlinecomment{Gradient Estimation}
    \STATE \(\mathbf{m} \leftarrow K \cdot \mathbf{p} + \operatorname{stop\_gradient}(\mathbf{h} - K \cdot \mathbf{p})\)
    \STATE \textbf{return} \(\mathbf{m}\)
  \end{algorithmic}
  \hrule
\end{minipage}
\end{center}

\section{Analysis}

\subsection{Complexity Analysis}\label{sec:appendix-complexity}

\subsubsection{Time Complexity Analysis}\label{sec:appendix-time-complexity}

The time complexity of Algorithm~\ref{alg:hyper-grad} is primarily determined by the Neumann series approximation used to estimate the inverse Hessian. This procedure involves one initial gradient computation on the validation set, followed by \(i\) iterations of Hessian-vector products (HVPs), and a final projection step. Each HVP can be computed efficiently via automatic differentiation, with the same time complexity as a standard gradient backpropagation pass, denoted \(\mathcal{O}(C_{\text{GNN}})\), without explicitly forming the Hessian matrix. As a result, the overall complexity of the Neumann approximation is linear in the number of approximation steps:
\[
\mathcal{O}(i \cdot C_{\text{GNN}}).
\]

The overall time complexity of Algorithm~\ref{alg:bilevel} is governed by the nested bilevel optimization structure combined with the cost of Graph Neural Network (GNN) propagations. Let \(T\) and \(I\) denote the number of outer and inner iterations, respectively, and let \(C_{\text{GNN}}\) represent the computational cost of a single GNN forward-backward pass, which typically scales as
\(
\mathcal{O}(|\mathcal{E}| D + |\mathcal{V}| D^{2}),
\)
for a graph with \(|\mathcal{V}|\) nodes, \(|\mathcal{E}|\) edges, and node embedding dimension \(D\). In each outer iteration, the algorithm performs \(I\) model update steps and one hypergradient computation involving \(i\) Neumann approximation steps. Hence, the total time complexity is
\[
\mathcal{O}(T \cdot (I + i) \cdot C_{\text{GNN}}).
\]

\bigskip

Additionally, we incorporate the computational cost of the Sparse Querying component. During each outer iteration, Sparse Querying selectively generates explanation texts and re-encodes features only for a sparse subset of nodes \(\mathcal{V}_{expl}\) with cardinality \(K\). The per-iteration cost for this step is
\[
\mathcal{O}\big( K \cdot (C_{\text{LLM}} + C_{\text{LM}}) \big),
\]
where \(C_{\text{LLM}}\) and \(C_{\text{LM}}\) indicate the computational cost of a single LLM inference and LM encoding, respectively.
More precisely, \(C_{\text{LLM}}\) depends on the input and output token lengths of the queried nodes rather than being a length-independent constant. We use \(C_{\text{LLM}}\) as a shorthand for the average per-query cost under our controlled prompting setup. Since we explicitly limit the output length to 50 words, the processed token length is relatively concentrated in practice; in an additional Qwen-27B run, the P10--P90 range of input plus output length is 159--476 tokens. Thus, treating \(C_{\text{LLM}}\) as an average per-query cost is an analytical convenience, not a claim that transformer inference is truly independent of sequence length.

Note that the full initial encoding of all node texts using the fixed LM, which requires \(\mathcal{O}(|\mathcal{V}| \cdot C_{\text{LM}})\) time, is performed once prior to the bilevel optimization and is thus excluded from the iterative complexity.

Consequently, the overall time complexity of Algorithm~\ref{alg:bilevel} including Sparse Querying is
\[
\mathcal{O}\left( T \cdot \left( K \cdot (C_{\text{LLM}} + C_{\text{LM}}) + (I + i) \cdot C_{\text{GNN}} \right) \right).
\]

Because \(K \ll |\mathcal{V}|\) and both LLM and LM are frozen models performing forward-only passes, the Sparse Querying overhead remains well controlled. This design ensures efficient and scalable training of the graph LLM model on large text-attributed graphs.

Since \(C_{\text{LLM}}\) is significantly larger than both \(C_{\text{LM}}\) and \(C_{\text{GNN}}\), and the number of outer iterations \(T\) is fixed and relatively small (set to 3 in our experiments), the overall time complexity of the algorithm~\ref{alg:bilevel} is dominated by the Sparse Querying's LLM inference cost. Therefore, the asymptotic time complexity can be approximated as
\[
\mathcal{O}(K \cdot C_{\text{LLM}}).
\]
This highlights that, despite incorporating sophisticated bilevel optimization and GNN training steps, the computational bottleneck lies in the selective LLM calls for explanation generation on a sparse subset of nodes. Our adaptive sparse querying strategy, which keeps \(K\) substantially smaller than \(|\mathcal{V}|\), thus plays a critical role in enabling the practical efficiency of the proposed framework.

\subsubsection{Space Complexity Analysis}\label{sec:space-complexity}

The Neumann-series approximation in Algorithm~\ref{alg:hyper-grad} does not materialize the Hessian matrix. It computes Hessian-vector products on the fly, so its peak memory remains on the same order as a standard GNN backward pass, denoted by \(\mathcal{O}(M_{\mathrm{GNN}})\), where \(M_{\mathrm{GNN}}\) is the peak memory required by the GNN forward/backward computation.

For the full \method{} framework, besides the GNN memory, we store selector variables \(\boldsymbol{\lambda}\), \(\mathbf{p}\), and \(\mathbf{h} \in \mathbb{R}^{|\mathcal{V}|}\), which require \(\mathcal{O}(|\mathcal{V}|)\) space. The sparse querying module materializes LLM outputs and refreshed LM embeddings only for the selected nodes \(|\mathcal{V}_{\mathrm{expl}}|=K\). Since these nodes are processed sequentially, the additional memory cost is
\[
\mathcal{O}(M_{\mathrm{LLM}} + M_{\mathrm{LM}} + KD).
\]
Therefore, the overall peak space complexity is
\[
\mathcal{O}(M_{\mathrm{GNN}} + |\mathcal{V}| + M_{\mathrm{LLM}} + M_{\mathrm{LM}} + KD).
\]
Since \(|\mathcal{V}| + KD \ll M_{\mathrm{GNN}} + M_{\mathrm{LLM}} + M_{\mathrm{LM}}\) in typical LLM-enhanced TAG settings, this simplifies to \(\mathcal{O}(M_{\mathrm{GNN}} + M_{\mathrm{LLM}} + M_{\mathrm{LM}})\) up to the lightweight selector and selected-embedding storage. In practice, the peak memory is dominated by frozen LM/LLM inference, so \method{} does not introduce a fundamentally new memory bottleneck beyond the language-model components already required by LLM-enhanced TAG learning.

\subsection{Analysis on Ablation Study of Sparse Selection Strategy}\label{sec: ablation-analysis}

To assess the contribution of the learned bilevel selector, we compare it with simpler node-importance rules under the same sparse-query setting. We keep the query budget, backbone, and training settings fixed, and replace only the selection strategy.

\textbf{Selection baselines.}
We evaluate the effectiveness of the bilevel-optimized selection strategy by comparing it against three representative baselines.
\textit{1) Heuristic (Local Dissimilarity)} prioritizes nodes whose features are semantically dissimilar to their neighborhoods. Its score is the cosine distance between a node feature $\mathbf{h}_n$ and the neighborhood mean $\mathbf{h}_n^{\mathcal{N}} = \frac{1}{|\mathcal{N}(n)|} \sum_{m \in \mathcal{N}(n)} \mathbf{h}_m$:
\[
\text{Score}_n^{\text{dissim}} = 1 - \frac{\mathbf{h}_n \cdot \mathbf{h}_n^{\mathcal{N}}}{|\mathbf{h}_n| |\mathbf{h}_n^{\mathcal{N}}|}.
\]
\textit{2) Heuristic (Prediction Entropy)} selects nodes where the base model is most uncertain:
\[
\text{Score}_n^{\text{ent}} = -\frac{\sum_{c=1}^{C} p_{n,c} \log p_{n,c}}{\log C},
\]
where $p_{n,c}$ is the softmax probability for class $c$ and $C$ is the number of classes.
\par\smallskip
\textit{3) LLM-GNN-style Selector} adapts the selector idea from LLM-GNN~\cite{chen2024e} to our semi-supervised feature-enhancement setting, providing a closer LLM-based node selection comparison.
We maintain identical hyperparameters, including budget \(K\), model architecture, and other settings, across all strategies.

\input{tables/ablation_exp.tex}

\textbf{Performance Analysis.}
The results in Table~\ref{tab:ablation-exp} demonstrate that \method{} consistently achieves the highest accuracy across both datasets, outperforming similarity-based, uncertainty-based, and LLM-GNN-style selectors.
Specifically, \textbf{Local Dissimilarity} targets potential outliers via static feature distributions, while \textbf{Prediction Entropy} selects nodes with high model uncertainty.
However, as shown in Table~\ref{tab:ablation-exp}, these baselines provide inconsistent gains; for instance, Prediction Entropy slightly improves performance on Instagram but underperforms on Photo compared to Local Dissimilarity.
The LLM-GNN-style selector is a stronger LLM-aware comparison under our setting, but it still lacks the validation-loss-driven bilevel objective used by \method{}.
In contrast, our bilevel approach \textit{dynamically} identifies nodes that provide the maximum gradient-based utility within the GNN's optimization landscape. By directly optimizing the selection mask for the validation loss, \method{} transcends simple heuristic signals to capture the complex synergy between graph topology and LLM-derived features, leading to more robust performance gains across different graph domains.

\subsection{Analysis on Transferability of Importance Scores}\label{sec: importance-transferability-anslysis}

To assess the transferability of the learned importance scores across different scale GNN architectures, we reuse a selector trained with a shallow proxy model across downstream GNNs with different depths. We keep the sparse query budget fixed and train each backbone using the transferred node scores.

\input{tables/transfer_exp.tex}
We hypothesize that using a shallow proxy mitigates the risks of overfitting and over-smoothing that typically plague bilevel optimization on deep GNNs. Specifically, the 5-layer model's large capacity may lead the selector to overfit the training set by selecting "shortcut" nodes that provide immediate but poorly-generalizable gains. In contrast, the 1-layer model acts as an implicit regularizer, forcing the selector to prioritize nodes whose local structural and textual signals are most informative. Furthermore, deeper GNNs often suffer from over-smoothing, which blurs node distinctiveness and results in fuzzy gradient signals that make it difficult for the selector to identify true "information sources." By utilizing a 1-layer proxy, \method{} preserves individualized node features and avoids the amplification of structural noise, thereby providing cleaner and more transferable importance scores that effectively complement the LLM's knowledge across different GNN depths.

\subsection{Sensitivity Analyses}
\label{app:sensitivity_k}
\paragraph{Selective querying budget \(K\).}
To provide further insights into the trade-off between computational efficiency and predictive performance, we conduct a sensitivity analysis on the hyperparameter $K$, which represents the budget for selective LLM querying. We evaluate \method{} on the Instagram dataset with $K \in \{2, 5, 10, 20, 50\}$. The results are summarized in Table~\ref{tab:k_sensitivity}.
\begin{table}[htbp]
\centering
\caption{Sensitivity analysis of the query budget $K$ on the Instagram dataset. The default setting used in our main experiments is $K=10$ (highlighted in gray).}
\label{tab:k_sensitivity}
\small
\resizebox{\columnwidth}{!}{%
\begin{tabular}{lccc}
\toprule
Budget $K$ & Time (s) $\downarrow$ & Accuracy (\%) $\uparrow$ & Std. Dev. \\
\midrule
2  & 116.94 & 65.63 & 0.39 \\
5  & 122.70 & 65.99 & 0.40 \\
\rowcolor[gray]{0.9} 10 (Default) & 129.04 & 66.22 & 0.55 \\
20 & 144.52 & 66.29 & 0.23 \\
50 & 197.18 & 66.44 & 0.48 \\
\bottomrule
\end{tabular}
}
\end{table}
\paragraph{Observation and Justification for Default $K=10$.}
As observed in Table~\ref{tab:k_sensitivity}, the performance of \method{} is remarkably robust across a wide range of budgets. 
Specifically, even with an extremely restricted budget ($K=2$), \method{} achieves $65.45\%$ accuracy, demonstrating the effectiveness of our adaptive querying strategy in identifying the most critical nodes. 
As $K$ increases from 2 to 10, we observe a steady gain in accuracy (+$0.59\%$) with only a minor increase in computational time ($10.3\%$). However, further increasing $K$ beyond our default value (from 10 to 50) yields diminishing returns: the accuracy only marginally improves by $0.22\%$, while the computational overhead increases significantly by $52.8\%$ (from 129.04s to 197.18s). 
This trend justifies our choice of \textbf{$K=10$} as the default hyperparameter, as it serves as an optimal ``sweet spot'' that balances predictive power and computational efficiency. The stable standard deviation across different values of $K$ further validates the reliability of our bilevel-optimized sparse querying.

\paragraph{Optimization hyperparameters.}
\label{sec:optimization-hyperparameter-sensitivity}

We further evaluate the sensitivity of \method{} to the temperature \(\tau\), the selector learning rate \(\lambda_{lr}\), and the number of Neumann approximation steps \(i\) on Instagram. Tables~\ref{tab:tau-sensitivity}--\ref{tab:neumann-sensitivity} show that both accuracy and runtime remain stable across a broad range of values, suggesting that \method{} is not overly sensitive to these optimization hyperparameters.

\begin{table}[htbp]
\centering
\caption{Sensitivity analysis of temperature \(\tau\).}
\label{tab:tau-sensitivity}
\small
\begin{tabular}{cccc}
\toprule
\(\tau\) & Time (s) & Acc. Mean & Acc. Std \\
\midrule
1 & 130.16 & 66.02 & 0.44 \\
2 & 130.90 & 66.17 & 0.56 \\
\rowcolor[gray]{0.9} 4 & 129.04 & 66.22 & 0.55 \\
8 & 130.87 & 66.11 & 0.51 \\
16 & 133.60 & 66.20 & 0.31 \\
\bottomrule
\end{tabular}
\end{table}

\begin{table}[htbp]
\centering
\caption{Sensitivity analysis of selector learning rate \(\lambda_{lr}\).}
\label{tab:lambda-lr-sensitivity}
\small
\begin{tabular}{cccc}
\toprule
\(\lambda_{lr}\) & Time (s) & Acc. Mean & Acc. Std \\
\midrule
0.0001 & 129.98 & 65.82 & 0.67 \\
0.0005 & 130.70 & 66.22 & 0.27 \\
\rowcolor[gray]{0.9} 0.001 & 129.04 & 66.22 & 0.55 \\
0.005 & 129.13 & 65.99 & 0.25 \\
0.01 & 130.80 & 66.15 & 0.38 \\
\bottomrule
\end{tabular}
\end{table}

\begin{table}[htbp]
\centering
\caption{Sensitivity analysis of Neumann approximation steps \(i\).}
\label{tab:neumann-sensitivity}
\small
\begin{tabular}{cccc}
\toprule
\(i\) & Time (s) & Acc. Mean & Acc. Std \\
\midrule
1 & 127.98 & 65.78 & 0.36 \\
5 & 127.95 & 66.09 & 0.29 \\
\rowcolor[gray]{0.9} 10 & 129.04 & 66.22 & 0.55 \\
20 & 129.06 & 65.85 & 0.71 \\
50 & 129.71 & 66.13 & 0.20 \\
\bottomrule
\end{tabular}
\end{table}

\section{Use of AI Assistants in Manuscript Preparation}
Generative AI tools were used to assist with language polishing, manuscript editing, and LaTeX formatting during the preparation of this paper. The authors reviewed and revised all AI-assisted text, verified the technical claims, equations, citations, and experimental results, and take full responsibility for the final content. No AI tool is listed as an author.

%% file: tables/ablation_exp.tex
\begin{table}[ht]
\centering
\caption{\textbf{Ablation study of the node selection strategy in \method{}.} We compare our bilevel-optimized sparse querying against prediction entropy, a local dissimilarity heuristic, and an LLM-GNN-style selector implemented under our semi-supervised feature-enhancement setting.}
\label{tab:ablation-exp}
\small
\resizebox{\columnwidth}{!}{%
\begin{tabular}{@{}lcc@{}}
\toprule
Selection Strategy       & Instagram Accuracy (\%) $\uparrow$ & Photo Accuracy (\%) $\uparrow$ \\ \midrule
Local Dissimilarity      & 65.89 $\pm$ 0.69         & 85.62 $\pm$ 0.22                \\
Prediction Entropy       & 65.82 $\pm$ 0.47         & 85.54 $\pm$ 0.34                \\
LLM-GNN-style Selector   & 65.98 $\pm$ 0.73         & 85.59 $\pm$ 0.24                \\
Ours (Bilevel-Optimized) & \textbf{66.22 $\pm$ 0.55}         & \textbf{85.75 $\pm$ 0.28}                \\ \bottomrule
\end{tabular}%
}
\end{table}

%% file: tables/transfer_exp.tex
\begin{table}[ht]
\vspace{-2mm}
\centering
\caption{\textbf{Transferability analysis and efficiency of node selection.} We report the Accuracy (\%) on different \textbf{Target} models using node selections optimized on different \textbf{Source} models. The \#Params column indicates the model complexity used during the bilevel optimization.}
\label{tab:transfer-exp}
\small
\resizebox{\columnwidth}{!}{%
\begin{tabular}{lccc}
\toprule
\textbf{Selection Source} & \textbf{\#Params} & \textbf{Target: Small} & \textbf{Target: Big} \\
\midrule
\textbf{Source: Small} & \textbf{2,050} & \textbf{65.89 $\pm$ 0.23} & \textbf{66.37 $\pm$ 0.67} \\
\textbf{Source: Big} & 462,338 & 65.85 $\pm$ 0.44 & 66.20 $\pm$ 0.17 \\
\midrule
\textit{Efficiency Gain} & \textbf{225.5$\times$} & \textit{+0.04} & \textit{+0.17} \\
\bottomrule
\end{tabular}
}
\vspace{-5mm}
\end{table}